\newcommand{\Z}{{\mathbb Z}}
\newcommand{\RR}{{\mathbb R}}
\newcommand{\N}{{\mathbb N}}
\providecommand{\C}[1]{\mathcal{#1}}
\DeclareMathOperator{\supp}{supp}
\newtheorem{theorem}{Theorem}[section]
\newtheorem{coro}[theorem]{Corollary}
\newtheorem{definition}[theorem]{Definition}
\newtheorem{example}[theorem]{Example}
\newtheorem{examples}[theorem]{Examples}
\newtheorem{lemma}[theorem]{Lemma}
\newtheorem{prop}[theorem]{Proposition}
\newtheorem{remark}[theorem]{Remark}
\newenvironment{Proof}{{\noindent\emph Proof}\;}
{\hfill$\square$\par\medskip} \newlength\headseptemp
\newcommand{\Hmm}[1]{\leavevmode{\marginpar{\tiny%
$\hbox to 0mm{\hspace*{-0.5mm}$\leftarrow$\hss}%
\vcenter{\vrule depth 0.1mm height 0.1mm width \the\marginparwidth}%
\hbox to 0mm{\hss$\rightarrow$\hspace*{-0.5mm}}$\\\relax\raggedright #1}}}
\begin{document}
\title[One-dimensional continuum models of quasicrystals]{Delone measures of finite local complexity and applications to spectral theory of one-dimensional continuum models of quasicrystals}
\author[]{Steffen Klassert$^1$}
\author[]{Daniel Lenz$^2$}
\author[]{Peter Stollmann$^3$}
\address{$^2$ Mathematisches Institut, Friedrich-Schiller-Universit\"at Jena, D-07743 Jena
Germany; URL: http://www.analysis-lenz.uni-jena.de/
$^3$ Fakult\"at f\"ur
  Mathematik, Technische Universit\"at, 09107 Chemnitz, Germany}
\begin{abstract} We study measures on the real line and present various versions of what it means for such a measure to take only finitely many values. We then study perturbations of the Laplacian by such measures. Using Kotani-Remling theory, we show that the resulting operators have empty absolutely continuous spectrum if the measures are not periodic. When combined with Gordon type arguments this allows us to prove purely singular continuous spectrum for some continuum models of quasicrystals.

\end{abstract}
\date{\today} %
\maketitle

\begin{center}
\emph{Dedicated to Hajo Leschke on the occasion of his 65th birthday}
\end{center}

\section*{Introduction}
Quasicrystals are interesting objects, from a physical and a mathematical point of view. For the physics we refer to the pioneering work of Shechtman, Blech, Gratias and Cahn \cite{ShechtmanBGC-84}, published some 25 years ago, as well as to \cite{BaakeM-00,Janot-92,Senechal-95} and the references therein. The mathematical interest has a number of sources. Here, we are mainly concerned with the fact that quantum mechanical models of quasicrystal exhibit a number of ``strange phenomena'': there is a tendency that the respective Hamiltonians have Cantor sets as spectra, with absence of absolutely continuous spectrum as well as absence of point spectrum. Moreover, anomalous transport is typical for these operators. Very roughly speaking, this can be motivated by the facts that quasicrystals are aperiodically ordered. They are, in a sense, close to periodic structures, which explains the absence of point spectrum. On the other hand, they can also be thought of as random operators, which explains the absence of absolutely continuous spectrum.

Starting with \cite{BIST,Sut,Sut2}, the general paradigm of singular continuous Cantor spectrum has been verified in the discrete one-dimensional setting for various examples, see e.g. the surveys  \cite{Damanik,Len} and references therein. The only results for multidimensional models we are aware of are from  \cite{LenzStollmann1} and the only results in the one-dimensional continuum setting are from \cite{Klassert}. Here, we present a considerable extension of the latter results. In the discrete case, absence of absolutely continuous is a consequence of Kotani theory, \cite{Kotani1,Kotani2}. Absence of eigenvalues can be shown with Gordon-type arguments, see \cite{Damanik} for a survey and a list of relevant references.

It is quite reasonable to try similar strategies for continuum models; one part of that program, the respective Gordon-type arguments have already been established in \cite{DamanikStolz}. So the main point is to prove absence of absolutely continuous spectrum by an appropriate version of Kotani's theory. This had been done in the thesis of one of us, \cite{Klassert} within the framework of Kotani theory for random operators.  In the meantime a deterministic version of Kotani theory has been developed by Remling \cite{Remling2}.
 In the present work we rely upon Remling's oracle theorem,  that allows a purely deterministic approach. The outcome is Theorem \ref{main} below that can also be applied in the random setting  of measure dynamical systems.

We start with an account on Schr\"odinger operators with measure potentials since we adopt the generality of \cite{Remling2} in the following section. There, we will also state the above mentioned oracle theorem, our main tool to prove absence of absolutely continuous spectrum.
 We give a little discussion of finite local complexity for point sets in Section \ref{Sec2}.
 Then, we present  several different versions of a concept of finite local complexity in the context of potentials in Section \ref{Sec2.5} and study their relationship.  This study may be of independent interest.  The main result is presented and proved in Section \ref{Sec3}. Section \ref{Sec4} deals with  applications of four results to measure dynamical systems. A systematic way to generate such systems from subshifts is discussed in Section \ref{Sec5}. The final Section \ref{Sec6} is then devoted to a discussion of singular continuous spectrum for certain models.

\section{Measure perturbations and the oracle theorem.}
\label{Sec1}
We follow \cite{BaakeL-05} and say that a signed Radon measure $\mu$ on $\RR$ is \textit{translation bounded} if there is an open set $U$ with $\sup_{t\in\RR}  |\mu|(U+t)<\infty$; we write $\C{M}$ for the space of all signed Radon measures and $\C{M}^\infty$ for the subspace of all translation bounded measures. Here $|\mu|$ denotes the variation of $\mu$. It is easy to see that
$$
\C{M}^\infty=\bigcup_{C\ge 0}\C{V}^C ,
$$
where the latter is defined as in \cite{Remling2}, i.e.,
$$
\C{V}^C =\{\mu\in \C{M}\mid |\mu|(J)\le C\max\{|J|,1\}\mbox{  for all compact intervals }J\subset \RR\} .
$$
As explained in \cite{Remling2}, there is a metric $d$ on $\C{V}^C$ that will be used later on. The shift is denoted differently here: we let $T_x\mu(A):= \mu (A-x)$ so that $\supp T_x\mu= \supp\mu +x$; in particular, $T_x=S_{-x}$ for the shift $S_x$ used in \cite{Remling2}. In the latter article, the author relies upon \cite{AmorR-05}, as far as the definition of $-\Delta+\mu$ is concerned. Here we sketch a more traditional way via quadratic forms that is pretty easy and direct, since translation bounded measures are form small with respect to $-\Delta$ in one dimension. Here, of course, the form $h_0$ associated to $-\Delta$ is defined on $W^{1,2}(\RR)$  by
$$h_0 (u) := \int_\RR u'(t)\bar{u}'(t)dt.$$

\begin{lemma}
 Let $\mu\in\C{M}^\infty$.
  Then $\mu$ is form small with respect to $-\Delta$. In particular,
$$D(h_0+\mu):= W^{1,2}(\RR), (h_0+\mu)[u,v]:= \int_\RR u'(t)\bar{v}'(t)dt +\int_\RR u(t)\bar{v}(t)d\mu(t)
$$
defines a closed semibounded form. The associated selfadjoint operator is denoted by $-\Delta+\mu$.
\end{lemma}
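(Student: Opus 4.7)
The plan is to show that $\mu$ (or rather $|\mu|$) is infinitesimally form small with respect to $h_0$, and then invoke the KLMN theorem to conclude closedness and semiboundedness of $h_0+\mu$ on $W^{1,2}(\RR)$.

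The engine of the argument is a local Sobolev estimate on unit intervals. For $u\in W^{1,2}(\RR)$ the function $f(x)=|u(x)|^2$ lies in $W^{1,1}(\RR)$ with $|f'|\le 2|u'||u|$. Writing $f(x)=f(y)+\int_y^x f'$ and averaging $y$ over $[x,x+1]$ yields, for every $\varepsilon>0$,
$$
|u(x)|^{2}\;\le\; \Bigl(1+\tfrac{1}{2\varepsilon}\Bigr)\int_{x}^{x+1}|u|^{2}\,dt \;+\; 2\varepsilon\int_{x}^{x+1}|u'|^{2}\,dt,
$$
after a Cauchy–Schwarz splitting of the cross term. Taking a supremum over $x\in[n,n+1]$ then bounds $\|u\|_{L^\infty([n,n+1])}^{2}$ by the corresponding $L^2$–norms of $u$ and $u'$ over the enlarged interval $[n,n+2]$.

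Next, since $\mu\in\C{M}^\infty$, there is a constant $C$ (depending on $\mu$) with $|\mu|([n,n+1])\le C$ uniformly in $n\in\Z$; this is exactly the equality $\C{M}^\infty=\bigcup_C\C{V}^C$ stated before the lemma. Decomposing
$$
\int_\RR |u|^{2}\,d|\mu| \;=\;\sum_{n\in\Z}\int_{[n,n+1)}|u|^{2}\,d|\mu| \;\le\; C\sum_{n\in\Z}\|u\|_{L^\infty([n,n+1])}^{2},
$$
and inserting the local Sobolev bound, the overlapping windows $[n,n+2]$ produce a finite multiplicity (two), and one arrives at
$$
\int_\RR |u|^{2}\,d|\mu|\;\le\; 4C\varepsilon\,\|u'\|_{2}^{2} \;+\; 2C\bigl(1+\tfrac{1}{2\varepsilon}\bigr)\,\|u\|_{2}^{2}
\qquad (u\in W^{1,2}(\RR)).
$$
Since $\varepsilon>0$ is arbitrary, $|\mu|$ (hence $\mu$, via $|\int u\bar v\,d\mu|\le\int|u||v|\,d|\mu|$) is infinitesimally form bounded relative to $h_0$. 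This shows in particular that the form $(h_0+\mu)[u,v]$ in the statement is well defined on $W^{1,2}(\RR)=D(h_0)$ as the perturbation is finite on this domain.

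The final step is a direct appeal to the KLMN theorem: a symmetric form that is infinitesimally bounded by a closed semibounded form, on the latter's form domain, yields a closed semibounded form on the same domain. Thus $h_0+\mu$ with domain $W^{1,2}(\RR)$ is closed and semibounded, and by the first representation theorem it is the form of a unique selfadjoint operator, which we may denote by $-\Delta+\mu$. The only technical obstacle is the local Sobolev estimate with the small parameter $\varepsilon$; the rest is bookkeeping to pass from ``unit interval'' to ``all of $\RR$'' via the translation boundedness of $|\mu|$.
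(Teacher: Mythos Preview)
Your proof is correct and follows essentially the same route as the paper: bound $\int|u|^2\,d|\mu|$ by decomposing into unit intervals, use a local Sobolev inequality with a small parameter to control $\|u\|_\infty^2$ on each interval by $\|u'\|_2^2$ and $\|u\|_2^2$, sum using translation boundedness, and conclude via KLMN. The only cosmetic difference is that the paper cites the local Sobolev estimate from Kuchment (so the bound lives on $[k,k+1]$ directly, with no overlap), whereas you derive it by hand and therefore pick up the harmless factor~$2$ from the enlarged windows $[n,n+2]$.
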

Let us sketch the proof:

We use the Sobolev embedding theorem that implies that every $u\in W^{1,2}(\RR)$ has a continuous representative and use this repesentative in
$$\mu[u,v]:=\int_\RR u(t)\bar{v}(t)d\mu(t) .
$$
That the so defined form
is form small follows from the Sobolev embedding theorem as well. See \cite{Kuchment-04}, Lemma 8 for an appropriate version that gives, far any $\gamma>0$ a $C_\gamma$ such that
\begin{eqnarray*}
 |\mu[u,u]| &=&  \int_\RR |u(t)|^2 d\mu(t)\\
&=&\sum_{k\in\Z} \int_{[k,k+1]} |u(t)|^2 d\mu(t)\\
&\le&\sum_{k\in\Z} C\| u|_{[k,k+1]} \|_\infty^2\\
&\le&\sum_{k\in\Z} C(\gamma \| u'|_{[k,k+1]} \|_2^2+C_\gamma \| u |_{[k,k+1]} \|_2^2)\\
& =& C(\gamma h_0[u,u] +C_\gamma \| u  \|_2^2) .
\end{eqnarray*}
That form small perturbations of closed forms yield closed forms is a fundamental truth, see the KLMN-Theorem in \cite{ReedS-75} or \cite{Faris-75}, Part I, Paragraph 5.

 \begin{remark}

\begin{itemize}
\item[(1)] In \cite{AmorR-05} an alternative direct definition of Schr\"odinger operators with measure perturbations is given. It has the advantage that the negative part of the measure need not be relatively bounded with respect to the Laplacian, so that operators can be treated that are unbounded from below.
\item[(2)] The form method can be used in any dimension; see \cite{Stollmann-92,StollmannV-96} for details and references.
\item[(3)] That these two different methods lead to the same operators is not explicitly stated in \cite{AmorR-05}. It follows, however, from the reasoning in the proof of Theorem 6.1 in \cite{Remling2}.
\end{itemize}
\end{remark}

We now state Remlings Oracle Theorem from \cite{Remling2} for the readers convenience:
\begin{theorem}[Oracle Theorem]
Let $A\subset \RR$ be a Borel set of positive measure, and let $\varepsilon>0$, $a,b\in\RR$, $a<b$, $C>0$. Then there exist $L>0$ and a continuous function (the oracle)
$$
\triangle:\C{V}^C_{(-L,0)}\to \C{V}^C_{(a,b)}
 $$
so that the following holds. If $\mu\in\C{V}^C$ and the half line operator $H=-\Delta +\mu\upharpoonright_{[0,\infty)}$ satisfies $\sigma_{ac}(H) \supset A$, then there exists an $x_0>0$ so that for all $x\ge x_0$, we have
$$
d\left( \triangle\left( 1_{(-L,0)}S_x\mu\right) , 1_{(a,b)}S_x\mu\right) <\varepsilon .
$$
\end{theorem}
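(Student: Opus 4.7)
The plan is to recast the statement as a compactness and continuity assertion on a suitable space of reflectionless measures. I would introduce $\C{R} = \C{R}(A,C) \subset \C{V}^C$, the set of full-line measures $\nu \in \C{V}^C$ for which the operator $-\Delta + \nu$ is reflectionless on $A$. Two structural facts would drive the argument: (i) $\C{R}$ is compact in the metric $d$, since $\C{V}^C$ itself is a compact metric space and the defining condition on the half-line $m$-functions (equality of boundary values $m_+(\lambda + i0) = -\overline{m_-(\lambda + i0)}$ for a.e.\ $\lambda \in A$) is closed under $d$-limits; and (ii) reflectionless operators enjoy a unique continuation property, so the restriction map $r_L : \C{R} \to \C{V}^C_{(-L,0)}$ is injective for every $L > 0$ (two reflectionless measures agreeing on any non-degenerate interval are equal).

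The bridge between the spectral hypothesis and these structural facts is Remling's key dichotomy: under $\sigma_{ac}(H) \supset A$, every right limit of $\mu$ lies in $\C{R}$. Equivalently, the orbit $\{T_{-x}\mu : x \geq 0\} = \{S_x \mu : x \geq 0\}$ accumulates at $\C{R}$ in the $d$-metric, so for each $\delta > 0$ there is $x_0$ with $d(S_x \mu, \C{R}) < \delta$ for all $x \geq x_0$. This is the step that converts a purely spectral assumption into a deterministic predictability statement: the asymptotic future of $S_x \mu$ is close to a potential whose behaviour everywhere is pinned down by its restriction to $(-L, 0)$.

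To construct the oracle, I would use that $r_L$ is a continuous injection from the compact set $\C{R}$, hence a homeomorphism onto its image, so the composition $F := \mathrm{ev}_{(a,b)} \circ r_L^{-1} : r_L(\C{R}) \to \C{V}^C_{(a,b)}$ is well defined and continuous, where $\mathrm{ev}_{(a,b)}(\nu) := 1_{(a,b)} \nu$. A Tietze-type extension (using metrizability of the ambient space) then produces a continuous $\triangle : \C{V}^C_{(-L,0)} \to \C{V}^C_{(a,b)}$ agreeing with $F$ on $r_L(\C{R})$. Given $\varepsilon > 0$, I would exploit uniform continuity of $\triangle$ and of $\mathrm{ev}_{(a,b)}$ on a suitable $d$-neighbourhood of $\C{R}$ to pick $\delta > 0$ such that $d(\sigma, \C{R}) < \delta$ forces $d(\mathrm{ev}_{(a,b)}(\sigma), \triangle(r_L(\sigma))) < \varepsilon$. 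Choosing $L$ large enough to make $r_L$ ``quantitatively injective'' on $\C{R}$, then $x_0$ from the previous step, and applying to $\sigma = S_x \mu$ yields the required inequality.

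The main obstacle, which I would import as a black box rather than reprove, is precisely the assertion that right limits of $\mu$ are reflectionless on $A$ whenever $\sigma_{ac}(H) \supset A$. This is the deep ingredient of Remling's deterministic Kotani theory, depending on the boundary behaviour of $m$-functions, Schur function techniques on the upper half-plane, and sum rule arguments relating ac spectrum to $L^1$-estimates on the imaginary parts of $m$. Once this is granted, the construction of $\triangle$ is a topological packaging exercise, and the continuum measure setting of $\C{V}^C$ causes no essential extra trouble: translation bounded measures form a compact metric space under $d$, and the form construction of $-\Delta + \mu$ established in the previous lemma ensures continuous dependence of the $m$-function on $\mu$, which is all the input needed to run the compactness argument.
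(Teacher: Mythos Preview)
The paper does not prove this theorem: it is quoted verbatim from Remling \cite{Remling2} ``for the reader's convenience'' and used as a black box in the proof of Theorem~\ref{main}. So there is no proof in the paper to compare against.

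That said, your sketch is broadly faithful to Remling's own argument: the central input is indeed that every right limit of $\mu$ is reflectionless on $A$ whenever $\sigma_{ac}(H)\supset A$, and the oracle is then manufactured by a compactness/continuity argument on the set $\C{R}(A,C)$ of reflectionless measures. One point deserves care: your item (ii), that $r_L:\C{R}\to\C{V}^C_{(-L,0)}$ is injective for \emph{every} $L>0$ because ``two reflectionless measures agreeing on any non-degenerate interval are equal'', is not how Remling proceeds and is not obviously true as stated---the reflectionless condition ties $m_+$ to $m_-$, but $m_-$ is determined by the potential on the full half-line $(-\infty,0)$, not on a finite window $(-L,0)$. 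What Remling actually uses is a compactness argument: for $L$ large the restriction $r_L$ is close to the identity in the metric $d$ (since $d$ is built from local data), hence on the compact set $\C{R}$ it becomes quantitatively injective for $L$ sufficiently large, which is exactly the hedge you make later. If you drop the strong unique-continuation claim and go directly to the large-$L$ compactness argument, your outline matches Remling's route.
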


\section{Notions of finite local complexity for sets.}\label{Sec2}
We describe the positions of ions in terms of a discrete set $D\subset\RR^d$. In case of a perfect crystal, $D$ would be \emph{periodic}, i.e., invariant under translations by elements of a lattice. Here lattice means a discrete subgroup that spans $\RR^d$ as an $\RR$-vector space. Recall that a \textit{Delone set} $D\subset \RR^d$ is characterized by two properties that it shares with periodic sets: It is uniformly discrete and relatively dense i.e.

\begin{itemize}

\item  there is $r>0$ such that $U_r(x)\cap U_r(y)=\emptyset$ for $x,y\in D, x\not=  y$,

\item   and there exists $R>0$ such that $\bigcup_{x\in D}B_R(x)=\RR^d$.

\end{itemize}
Here, the open and closed balls of radius $s>0$ around $x\in \RR^d$ are denoted by $U_s (x)$ and $B_s (x)$ respectively. Delone sets may be used to describe the positions of ions in solids that are not as ordered as crystals. The following property is clearly true for periodic sets:

A discrete set $A\subset\RR^d$ is said to be of \emph{finite local complexity}, provided the following holds: for any $L\ge 0$
$$
\{ B_L(x)\cap A-x\mid x\in A\}
$$
is a finite set of subsets of $\RR^d$. We are mainly concerned with subsets of the real line in which case the property means that in any interval $[x-L,x+L]$, $x\in A$ only finitely many different pattern occur. We do not go into formalizing the concept of pattern here, see \cite{LenzS-01} for a thorough discussion and Figure \ref{fig:1} for an illustration.
\begin{figure}
 \centering
 \includegraphics[width=11cm,height=2cm]{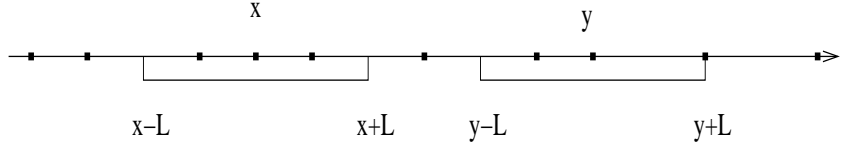}
 \caption{A Delone set of finite local complexity}
 \label{fig:1}
\end{figure}

Since the geometry of $\RR$ is particularly simple, we can easily verify the following observation.
\begin{remark}
\begin{rm} \begin{itemize}
  \item [(1)] A closed set $D\subset\RR$ is a Delone set if it can be written as $D=\{ x_n\mid n\in \Z\}$ where $x_{n+1}-x_n\in [2r, \frac{R}{2}]$; it is of finite local complexity if, additionally,  $x_{n+1}-x_n$ can only take finitely many values. (For investigation of higher dimensional analogues to these facts, we refer to \cite{Lag}.)
\item [(2)] A set of finite local complexity is uniformly discrete but not necessarily relatively dense.
 \end{itemize}              \end{rm}
\end{remark}

Therefore, it is different whether one bases a concept of finite local complexity on ``what can be seen around a given point''  or on ``what comes next''. This is further illustrated by the following examples.

\begin{examples}\begin{rm}
 \begin{itemize}
  \item[(1)]  Any periodic set is a Delone set of finite local complexity.
  \item[(2)] Any subset of a set of finite local complexity has finite local complexity, e.g., any subset of $\Z$.
 \end{itemize}\end{rm}
\end{examples}

It will be convenient to deal with colored Delone sets as well. Let $A$ be a finite set. A colored Delone set with colors from $A$ is a pair $(D,a)$ consisting of a Delone set $D$ and a map
$$ a : D\longrightarrow A.$$
The restriction of  a colored Delone set $D' =(D,a)$  to a subset $B$ is defined to be the pair
$$ D'\cap B:=(D\cap B,  a: D\cap B\longrightarrow A).$$
A colored Delone set is said to have finite local complexity if the underlying Delone set has finite local complexity. As there are only finitely many elements in $A$ this can easily be seen to be equivalent to finiteness of the set of restriction
$$ \{ (D' - x) \cap B_R : x\in D\}.$$

\section{Notions of finite local complexity for measures} \label{Sec2.5}
The kind of potentials we are interested in are of the form
$$
V_D(t)=\sum_{x\in D}V(t-x) ,
$$
where $V\in L^1$ is a given single site potential of compact support and $D$ is uniformly discrete. More generally, one may be interested in potentials of the form
$$V_{(D,a)}:=\sum_{x\in D} V_{a(x)} (t -x)$$
where $(D,a)$ is a Delone colored by the finite $A$ and  $V_a, a\in A$, are $L^1$ potentials with compact support.  From the analytical point of view, it does not make much difference to include finite signed measures of compact support as single site potentials; a typical piece
of such a potential is depicted in Figure \ref{fig:2}.

\begin{figure}[h!]
 \centering
 \includegraphics[width=12.5cm,height=1cm]{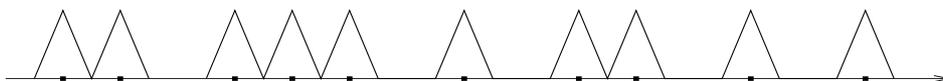}
 \caption{A typical potential $V_D$}
 \label{fig:2}
\end{figure}

If we do not want to use the construction of $V_D$, how can we formalize the finite local complexity it inherits from $D$? The problem we encounter here is the fact that although from a certain perspective, $V_D$ is made of a finite number of parts, we have to pick the right parts. If we slice the real line into intervals in different ways, we end up with an infinite supply of different forms. For point sets as above, this difficulty does not occur, as the point set itself gives a natural grid. Note that point sets can naturally be interpreted as measures: any uniformly discrete $D$ induces the translation bounded measure
$$
\delta_D:= \sum_{x\in D}\delta_x .
$$
Note that $V_D$ defined above is just given as $\delta_D \ast V$, where $\ast$ denotes the usual convolution. Similarly, we may define a convolution between  Delone sets colored by $A$ and tuples of functions $V_a$, $a\in A$, by
$$ {(D,a)} \ast V := V_{(D,a)}.$$

We now present a number of different concepts one might come up with. An ultimate criterion of its usefulness for our present purpose lies in the applicability to spectral theory we have in mind, especially whether it can be used together with the oracle theorem, mentioned above. In our context, it will turn out that the simple
finite decomposition property s.f.d.p. given in Definition \ref{sfdp} gives what we need for our purposes.

\medskip

We start with some general notation used in our dealing with measures and their local restrictions:
 A \emph{piece} is a pair $(\nu,I)$ consisting of a closed interval $I$ with length $|I|>0$ and a measure $\nu$ supported on $I$; we write $\nu^I$ as shorthand notation, or sometimes only $\nu$, which is, of course, an abuse of notation. If $\mu$ is a measure on $\RR$ we say that $\nu^I$ \emph{occurs in} $\mu$ \emph{at} $x$ if $(1_{[x,x+|I|]}\mu,[x,x+|I|]) $ is a translate of $(\nu, I)$ (in the obvious sense).
We call $|I|$ the length of the piece $\nu^I$. A \emph{finite piece} is a piece with finite length. The \emph{concatenation} $
\nu^I=\nu_1^{I_1}|\nu_2^{I_2}|...$ of a finite or countable family $(\nu_j^{I_j})_{j\in N}$, $N=\{ 1,2,...,|N|\}$ of  finite pieces is easy to visualize and somewhat complicated to write down:
$$
I=[\min I_1, \min I_1+\sum_{j\in N} |I_j| ]
$$
$$
\nu= \nu_1+ \sum_{k\in N, k\ge 2}T_{(\min I_1+\sum_{l=1}^{k-1}|I_l|-\min I_k)}\nu_k .
$$
In this case we also say that $\nu^I$ \emph{is decomposed into}  $(\nu_j^{I_j})_{j\in N}$. Then, $\nu_j^{I_j}$ is the measure that occurs at $x$ for
$$x \in [ \min I_j +\sum_{k< j} |I_j|, \max I_j +\sum_{j< j} |I_j|].$$
See Figure \ref{fig:4} for an illustration.
\begin{figure}[h!]
 \centering
 \input{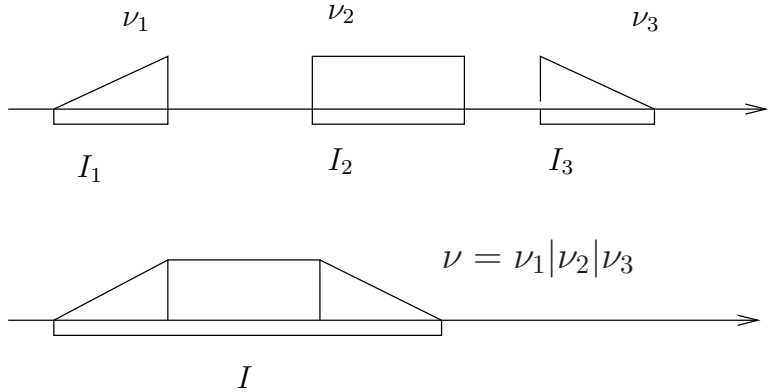}
 \caption{The concatenation of pieces.}
 \label{fig:4}
\end{figure}
What the formulae express is easy to describe: we translate the piece $\nu_2^{I_2}$ in such a way that it starts where $
 \nu_1^{I_1}$ ends. Then we pick the next piece and put it right to the right of $\nu_2^{I_2}$, and so on.

\smallskip

Now, here comes one version of finite local complexity for measures.

\begin{definition}
 A signed measure $\mu$ has a \emph{finite number of local pieces} with parameter $\rho>0$, abbreviated \emph{f.l.p} if  for every $x\in\RR$ there is $x^\prime\in \RR$ with $| x - x'|\leq \rho$  such that,  for every $L\ge  2 \rho$,
$$
\{ 1_{[-L,L]}(T_{x^\prime}\mu)\mid x\in\RR\}
$$
is a finite set of measures. We say that $\mu$ is \emph{finally f.l.p}, if there is $x_0\in\RR$ such that $1_{[x_0,\infty)}\mu$ is f.l.p.
\end{definition}

\begin{remark}\label{rem2.1}
 \begin{itemize}\begin{rm}
  \item [(1)] If $D$ is a Delone set of finite local complexity, then $\delta_D$ has f.l.p. In fact, we can pick $\rho=R$ and $x^\prime\in D\cap [x-\rho,x+\rho]$, the latter set being nonempty due to the relative denseness condition.
   \item [(2)] If $D$ is any subset of a periodic set, then $\delta_D$ has f.l.p. In particular, there are f.l.p point measures such that the underlying sets are no Delone sets. \end{rm}

 \end{itemize}
\end{remark}

Another version of finite local complexity is given as follows.

\begin{definition} A  measure $\mu$ has the \emph{finite decomposition property}, abbreviated f.d.p, if there exists
a finite set $\C{P}$ of finite pieces, called local pieces,  and $x_0\in\RR$ such that $1_{[x_0,\infty)}\mu^{[x_0,\infty)}$ is the translate of a concatenation $\nu_1^{I_1}|\nu_2^{I_2}|...$, where the $\nu_j^{I_j}\in \C{P}$, for all $j\in\N$.
We denote $x_1:=x_0+|I_1|, x_2:= x_1+ |I_1|$, ... and call the set $\{x_0,x_1,...\}$ the \emph{grid} of the decomposition.
Without restriction we can and will assume that $\min I=0$ for every $\nu^I\in\C{P}$. We denote by $l_ \C{P}$, $L_\C{P}$ the minimal and maximal length of pieces in $\C{P}$, respectively.
\end{definition}

For our main result we will need a stronger version of f.d.p.

\begin{definition}\label{sfdp} A $\mu$ has the \emph{simple finite decomposition property}, abbreviated s.f.d.p,  if it has the f.d.p with decomposition such that  there is $\ell >0$ with the following property: assume that two pieces $\nu_{-m}^{I_{-m}}|...|\nu_0^{I_0}|\nu_1^{I_1}|\nu_2^{I_2}|...|\nu_{m_1}^{I_{m_1}}$ and
$\nu_{-m}^{I_{-m}}|...|\nu_0^{I_0}|\mu_1^{J_1}|\mu_2^{J_2}|...|\mu_{m_2}^{J_{m_2}}$ occur in the decomposition of $\mu$ with a common first part
$\nu_{-m}^{I_{-m}}|...|\nu_0^{I_0}$ of length at least $\ell$ and such that
$$
1_{[0,\ell)}(\nu_1^{I_1}|\nu_2^{I_2}|...|\nu_{m_1}^{I_{m_1}})=1_{[0,\ell)}(\mu_1^{J_1}|\mu_2^{J_2}|...|\mu_{m_2}^{J_{m_2}}),
$$
where the $\nu_j^{I_j}$, $\mu_k^{J_k}$ are from the decomposition (in particular, all the pieces belong to $\C{P}$ and so they start at $0$) and the latter two concatenations are of lengths at least $\ell$. Then
$$
\nu_1^{I_1}=\mu_1^{J_1} .
$$
\end{definition}

Note, that instead of the equality of the pieces we ask for in the last equation of the definition it suffices to ask for the equality of the length of the pieces.

\bigskip

Before we illustrate the preceding definition with examples let us explain the difficulties that might occur and prevent a measure with f.d.p from satisfying s.f.d.p. Essentially there are two problems: one is periodicity and the other is local constancy. Let us discuss periodicity first:  Among our local  pieces there may be  $\nu$ and $\nu'$ with  $\nu'$ equal to a  concatenation of finitely many copies of $\nu$. Then, a high power of $\nu$ can not be decomposed uniquely. This problem may be circumvented by deleting $\nu'$ from our set of local pieces. We now turn to local constancy. You might decompose a fixed multiple of Lebesgue  measure - e.g. the measure  zero -  in pieces of two different lengths, say, in such a way that it does not have the s.f.d.p (whereas f.d.p holds): take one short piece and then a long one, then two short pieces and then a long one, etc. For a decomposition of that fashion, you can never predict the appearance of a long piece. Note that this situation is not covered by the periodic situation (as the lengths of the pieces may be rationally independent). In some sense these are the only ways how a measure with f.d.p can fail to satisfy s.f.d.p:

\begin{prop}\label{hilf} Let $\mu$ be a measure with f.d.p. with respect to the local pieces  $\nu_1^{I_1},\ldots, \nu_N^{I_N}$. If $\mu$ does not have s.f.d.p. with respect to any  set of local pieces, then there must be two local pieces among the $\nu_j^{I_j}$, which are multiples of Lebesgue measure i.e. have the form $ c\;  1_{I} \;  dx$ for suitable constants $c$ and intervals $I$.
\end{prop}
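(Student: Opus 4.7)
I would argue the contrapositive: assuming at most one piece of $\mathcal{P}=\{\nu_1^{I_1},\ldots,\nu_N^{I_N}\}$ is a multiple of Lebesgue measure, construct a set of local pieces with respect to which $\mu$ has s.f.d.p. First I would \emph{normalize} $\mathcal{P}$ to an irredundant subset $\mathcal{P}^{\ast}\subseteq\mathcal{P}$: iteratively discard any piece that is a nontrivial concatenation of other pieces in $\mathcal{P}$, replacing each occurrence of such a piece in the decomposition of $\mu$ by the corresponding concatenation. The process terminates since $|\mathcal{P}|$ strictly decreases, $\mathcal{P}^{\ast}$ still yields a valid f.d.p.\ decomposition of $\mu$, and, since $\mathcal{P}^{\ast}\subseteq\mathcal{P}$, still contains at most one Lebesgue multiple.

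Next, assume for contradiction that s.f.d.p.\ also fails for $\mathcal{P}^{\ast}$. Then for every $n\in\mathbb{N}$ there exist two subsequences of the $\mathcal{P}^{\ast}$-decomposition of $\mu$ sharing a common prefix of length $\ge n$, with matching next-windows on $[0,n)$ as measures, but with distinct immediate next pieces. Pigeonholing over the finite set $\mathcal{P}^{\ast}$ yields a fixed pair $\nu\neq\nu'$ in $\mathcal{P}^{\ast}$ serving as these divergent next pieces for arbitrarily large $n$. Equal lengths would force $\nu=\nu'$ as measures, so without loss $|\nu|<|\nu'|$ and $\nu'|_{[0,|\nu|]}=\nu$.

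Now write the $\nu$-side continuation as $\nu|\alpha_2|\alpha_3|\ldots$ with $\alpha_j\in\mathcal{P}^{\ast}$, and let $k$ be minimal with $|\nu|+\sum_{j=2}^{k}|\alpha_j|\ge|\nu'|$. Equality would produce $\nu'=\nu|\alpha_2|\ldots|\alpha_k$, a nontrivial concatenation of pieces of $\mathcal{P}^{\ast}$ contradicting irredundancy. Hence the overshoot $s:=|\nu|+\sum_{j=2}^{k}|\alpha_j|-|\nu'|\in(0,|\alpha_k|)$ is strictly positive, and the two decompositions past $\nu'$ give two distinct $\mathcal{P}^{\ast}$-tilings of the same tail of $\mu$ whose grids are rigidly displaced by $s$. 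Since the original agreement could be made arbitrarily long, one iterates this analysis at each subsequent mismatch, generating an increasingly long double-tiling with accumulating offsets.

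The final step is to deduce from this persistent double-tiling that at least two distinct pieces of $\mathcal{P}^{\ast}\subseteq\mathcal{P}$ must be multiples of Lebesgue measure. The offsets $s$ lie in the additive group generated by the finite set of piece-lengths in $\mathcal{P}^{\ast}$, which is either discrete or dense in $\RR$. In the dense regime, the measure $\mu$ is invariant under a dense set of translations on the corresponding region and hence coincides there with a multiple of Lebesgue measure; since both tilings involve pieces of at least two distinct lengths (otherwise no nontrivial shift could result), the participating pieces of $\mathcal{P}^{\ast}$ supported in that region are themselves Lebesgue multiples. In the discrete regime, the offsets cycle and produce a finite window carrying two genuinely different $\mathcal{P}^{\ast}$-tilings of the same measure; inspecting the sliding and invoking irredundancy once more extracts the two constant-density pieces. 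Either horn of the dichotomy contradicts the standing assumption. The main obstacle is exactly this dichotomy analysis: turning the iterated local overshoots into a rigorous global ``sliding'' statement and ruling out that only one Lebesgue piece suffices to accommodate the slide. This is precisely the phenomenon illustrated informally before the proposition by tiling Lebesgue measure with two different length scales, and formalising it via compactness and a careful pigeonholing on $\mathcal{P}^{\ast}$ is where the technical heart of the proof lies.
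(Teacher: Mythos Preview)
Your approach is genuinely different from the paper's, and the gap you yourself flag in the final step is real and, as stated, not merely a matter of filling in details.

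The paper does not work inside a normalized subfamily of $\mathcal{P}$ at all. Instead it considers, for each $j$, the set $D_j$ of \emph{all} positions at which the piece $\nu_j^{I_j}$ occurs in $\mu$ (whether or not that occurrence lines up with the given decomposition), and sets $D=\bigcup_j D_j$. The dichotomy is then on $D$: if $D$ is uniformly discrete one recodes $\mu$ along $D$ and reads off s.f.d.p.\ directly; if $D$ is not uniformly discrete then some $D_j$ has an accumulation point, which forces $\nu_j$ to be locally translation invariant, i.e.\ a multiple of Lebesgue measure. The case of exactly one such $j$ is then dispatched by a second recoding using the occurrences of the remaining (non-Lebesgue) pieces. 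This argument never needs irredundancy or any sliding analysis.

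Your argument, by contrast, stays inside the given decomposition and tries to extract two Lebesgue pieces from a sequence of grid offsets $s$. The problem is the inference in the ``dense regime'': from the fact that two $\mathcal{P}^\ast$-tilings of the same stretch of $\mu$ have grids displaced by $s$ you conclude that $\mu$ is invariant under translation by $s$ on that region. This does not follow. Two tilings of the same measure with shifted grids say only that the same measure admits two decompositions, not that $T_s\mu=\mu$ locally; indeed the piece occurring at a grid point of one tiling need not occur at the corresponding shifted point of the other. What one can legitimately extract is that \emph{some individual piece} occurs at two nearby positions, and iterating along accumulating offsets would then give that this piece is a Lebesgue multiple---but that is precisely the paper's ``accumulation $\Rightarrow$ local translation invariance'' step, carried out on the occurrence sets $D_j$ rather than on $\mu$ itself. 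Your dichotomy on the additive group generated by piece-lengths is the wrong invariant: that group can be dense (e.g.\ lengths $1$ and $\sqrt{2}$) without any piece being Lebesgue and without s.f.d.p.\ failing.

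The discrete branch is equally unfinished: you assert that the offsets ``cycle'' and that ``inspecting the sliding and invoking irredundancy once more extracts the two constant-density pieces'', but no mechanism is given, and irredundancy alone does not prevent two non-Lebesgue pieces from overlapping with a fixed rational offset. If you want to salvage the strategy, the natural move is to abandon the length-group dichotomy and instead show directly that the pigeonholed pair $(\nu,\nu')$ together with the overshoot forces accumulation of occurrences of some piece---at which point you have essentially rederived the paper's argument.
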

\begin{Proof} Denote by $D_j$ the set of occurrences of $\nu_j^{I_j}$ in $\mu$. (Here, we count all occurrences irrespective whether these occurrences match with the given decomposition.) Then,
$D:= \cup D_j$ is relatively dense, as there  exists a decomposition of $\mu$.  We consider two cases.

\smallskip

\textit{ $D$ is uniformly discrete.} In this case,  we can just recode $\mu$ in the following way: Let
$$ \ldots x_{-1} < x_0 < x_1 < \ldots$$
be the points of $D$.
Then, we can decompose $\mu$ according to
$$ \ldots \mu|_{[x_{-1},x_0)} \mu|_{[x_{0},x_1)}\mu|_{[x_{1},x_2)}\ldots$$
As $\mu$ has f.d.p  with respect to the $\nu_j^{I_j}$, this is indeed a decomposition into finitely many different pieces. By construction the occurrence of each piece is determined locally i.e. only depends on the restriction of $\mu$ to a certain neighborhood of the piece. This gives the desired simplicity.

\smallskip

\textit{ $D$ is not uniformly discrete.}  Then,  there must exist an   accumulation points among the occurrences of the local pieces. If the occurrences of a fixed  local piece accumulate, then this piece must be locally translation invariant and hence a multiple of the Lebesgue measure. If there are two  local pieces with  accumulation points we therefore obtain two pieces which are multiples of Lebesgue measure.

Thus, it remains to show that if there is only one local piece with accumulation points, then the system still has s.f.d.p.:  In this case, there is only one exceptional  piece which is a multiple of Lebesgue measure. If this is the only piece in the decomposition, s.f.d.p is obviously valid.
Otherwise, we can use a coding by the  occurrences of the other local pieces and break gaps resulting from  multiple occurrences of the exceptional piece  by starting at the leftmost occurrence of the piece in the gap and then proceeding by adding this piece step by step as long as possible. By f.d.p. this will result in only finitely many pieces.
\end{Proof}

\begin{remark} {\rm Let us note that even in non-periodic situations it can be hard to decompose into a fixes set of pieces. This is well known even for Delone sets. For example, for certain Delone sets  coming from so called primitive substitutions uniqueness of some natural decompositions has been investigated (and proven)  in \cite{Sol}  (see \cite{Mos} as well for the corresponding result in symbolic dynamics).}
\end{remark}

\begin{example}
 \begin{rm}
  \begin{itemize}
   \item [(1)] If $\mu$ is finally periodic, i.e., there exist $x_0\in\RR$ and $p>0$ such that $\mu(A+p)=\mu(A)$ for all measurable $A\subset [x_0,\infty)$, then $\mu$ has the s.f.d.p. In fact, $\C{P}:=\{ (1_{[x_0,x_0+p)}\mu, [x_0,x_0+p])\}$ does the job, as one easily verifies.
   \item [(2)] If $D$ is a Delone set of finite local complexity, then $\delta_D$ has the s.f.d.p. In fact, this follows from the  Lemma  belwo plus the fact that such measures have the f.e.p, but we can also define a suitable set $\C{P}$ directly: as remarked in \ref{rem2.1}, there are only finitely many different distances between neighboring points of $D$; denote them by $0<l_1<...<l_n$. We put $I_j=[0,l_j]$, $\nu_j=\delta_o$ and $\C{P}=\{ \nu_j^{I_j}\mid j=1,...,N\}$. Now, for every $x_0\in D$ we find a decomposition; due to the construction of $\C{P}$, every decomposition is simple.

\item[(3)] The measure $\mu$ can be shown to have f.d.p. if and only if there exists a colored Delone $D'$ set and an $A$ tupel $(\nu_a)$, $a\in A$, of measures with compact support such that $\mu =D' \ast \nu$.
  \end{itemize}
 \end{rm}
\end{example}

While the preceding notions were  based on ``what can be seen around a given point'', the next notion formalizes the ``what comes next'' point of view. This notion is stronger than the previous ones:

\begin{definition}
A signed measure $\mu$ is said to have the \emph{finite extension property}, short \emph{f.e.p} if there exist $\rho >0$, $L_0$ and, for any $L\ge L_0$ a finite set $\C{E}^L$ of pieces with interval $[0,L]$ with the following property: for $r\ge\rho$ and any piece $\nu^{[0,r]}$ there exists $r^\prime\in [r-\rho, r]$ such that whenever $\nu^{[0,r]}$ occurs in $\mu$ at $x$, there is a $\hat{\nu}^{[0,L]}\in \C{E}^L$  such that
$(1_{[0,r^\prime]}\nu+T_{r^\prime}\hat{\nu}, [0,r^\prime+L])$ occurs in $\mu$ at $x$.
We say that  $\mu$ is \emph{finally f.e.p} if there is an $x_0$ such that the above property is valid for $x\ge x_0$.
\end{definition}
See Figure \ref{fig:3} for an illustration; here are some remarks:

\begin{figure}[h!]
 \centering
\begin{picture}(0,0)%
\includegraphics{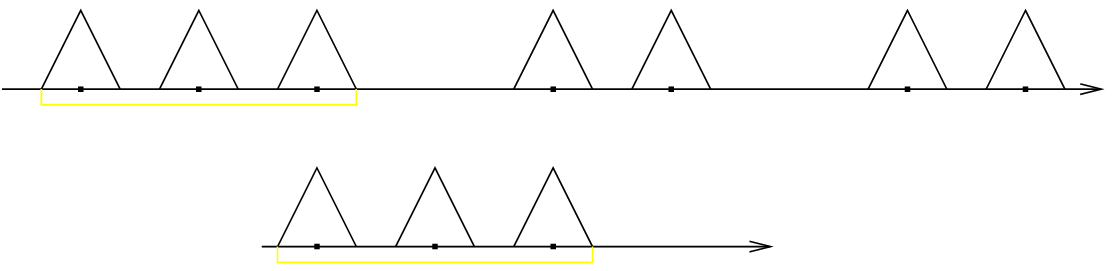}%
\end{picture}%
\setlength{\unitlength}{4144sp}%
\begingroup\makeatletter\ifx\SetFigFont\undefined%
\gdef\SetFigFont#1#2#3#4#5{%
  \reset@font\fontsize{#1}{#2pt}%
  \fontfamily{#3}\fontseries{#4}\fontshape{#5}%
  \selectfont}%
\fi\endgroup%
\begin{picture}(6324,2134)(439,-1934)
\put(2656, 29){\makebox(0,0)[lb]{\smash{{\SetFigFont{14}{16.8}{\rmdefault}{\mddefault}{\updefault}{\color[rgb]{0,0,0}$\mu$}%
}}}}
\put(2431,-1861){\makebox(0,0)[lb]{\smash{{\SetFigFont{14}{16.8}{\rmdefault}{\mddefault}{\updefault}$\nu$}}}}
\end{picture}%

 \caption{A piece $\nu^I$ and its occurrence in $\mu$.}
\label{fig:3}
\end{figure}

\begin{remark}\begin{rm}
 \begin{itemize}
 \item [(1)] Note that there is a difference in how we can define finally f.l.p and finally f.e.p: the measure $1_{[x_0,\infty)}\mu$ is almost never f.e.p. In fact, only if it is zero. Otherwise, there are arbitrarily long intervals where $\mu$ is zero, followed by something nontrivial. On the other hand, if $\mu$ is f.l.p, then  $1_{[x_0,\infty)}\mu$ is f.l.p as well
\item [(2)] If $D$ is a Delone set of finite local complexity, then $\delta_D$ has f.e.p. In fact, we can pick $\rho=R$ and $x^\prime\in D\cap [x-\rho,x+\rho]$, the latter set being nonempty due to the relative denseness condition.
   \item [(3)] If $D$ is any subset of a periodic set, then $\delta_D$ is f.l.p but not necessarily f.e.p.
\end{itemize}
\end{rm}
\end{remark}

Finally, we define a strong form of decomposition property.

\begin{definition}
  A  measure $\mu$ has the \emph{unique decomposition property}, abbreviated u.d.p, if
  it has the finite decomposition property and there exists an $R>0$ such that the  local piece occurring  at an $x$ in the decomposition is uniquely determined by the restriction of $\mu$ to $[x-R, x+R]$.
 \end{definition}

The following lemma gathers simple relations between the various notions of local finiteness introduced so far. It is easy to prove.
\begin{lemma}
$ u.d.p \Longrightarrow f.e.p. \Longrightarrow s.f.d.p \Longrightarrow f.d.p \Longrightarrow f.l.p.$
\end{lemma}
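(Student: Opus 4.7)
The plan is to verify each of the four implications in turn. The containment s.f.d.p. $\Rightarrow$ f.d.p. is immediate from the definitions, since s.f.d.p. is defined as f.d.p. together with an additional distinctness condition. For f.d.p. $\Rightarrow$ f.l.p., let $\C{P}$ denote the finite set of local pieces and $\{x_k\}_{k\ge 0}$ the grid, and take $\rho := L_{\C{P}}$. For any $x \ge x_0$ choose $x'$ to be the nearest grid point, so $|x - x'| \leq L_{\C{P}} = \rho$. For $L \geq 2\rho$ the restriction $1_{[-L, L]}(T_{x'}\mu)$ is determined by the finite sequence of local pieces covering $[x' - L, x' + L]$, which has at most $\lceil 2L / l_{\C{P}} \rceil + 2$ entries drawn from the finite set $\C{P}$; hence only finitely many local views can arise.

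For u.d.p. $\Rightarrow$ f.e.p., let $R$ be the u.d.p. radius and set $\rho := L_{\C{P}} + R$. Given any piece $\nu^{[0, r]}$ with $r \geq \rho$, u.d.p. lets us identify from $\nu$ alone the grid points of the decomposition of $\mu$ that lie in $[R, r - R]$, since whether such a position is a grid point and which piece occupies it is determined by a radius-$R$ neighbourhood, which is contained in $\nu$. Let $r'$ be the largest such intrinsic grid point; then $r' \in [r - L_{\C{P}} - R, r - R] \subseteq [r - \rho, r]$ and $r'$ depends only on $\nu$. For any occurrence of $\nu$ at $x$, the point $x + r'$ is a genuine grid point of $\mu$, so $\mu|_{[x + r', x + r' + L]}$ is an initial segment of a concatenation of pieces from $\C{P}$. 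The set $\C{E}^L$ of all such length-$L$ initial segments is finite (bounded by $|\C{P}|^{\lceil L / l_{\C{P}}\rceil + 1}$), which serves as the extension alphabet.

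The remaining implication f.e.p. $\Rightarrow$ s.f.d.p. is the most delicate and is the main obstacle. Fix $L \geq L_0$ and use f.e.p. to build a decomposition inductively: starting at $x_0$, take $\nu_0 := \mu|_{[x_0, x_0 + \rho]}$, obtain the intrinsic $r'_0 \in [0, \rho]$ and the extension $\hat{\nu}_0 \in \C{E}^L$, and let the first piece run from $x_0$ to $x_0 + r'_0 + L$; then iterate using the last $\rho$ units of the new endpoint as the next initial pattern. The pieces used are drawn from a finite set determined by the finite data $(\nu_0, r'_0, \hat{\nu}_0)$, so f.d.p. holds. For the simplicity condition, use the finiteness of $\C{E}^L$: by continuity of measure on nested intervals there is some $\ell \leq L$ such that any two distinct elements of $\C{E}^L$ already differ when restricted to $[0, \ell)$; enlarge $\ell$ if needed so that $\ell \geq \rho$. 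Then whenever two decomposition prefixes share a past of length $\ell$, the last $\rho$ units coincide, so they produce the same initial pattern $\nu$; if in addition their next $\ell$ units agree as measures, the associated extensions in $\C{E}^L$ must coincide, forcing the first future pieces to be equal. The subtle point, and the reason this step is harder than the others, is that f.e.p. asserts only existence (not uniqueness) of the extension $\hat{\nu}$ for a given occurrence, whereas s.f.d.p. demands a deterministic identification of the next piece; the finiteness of $\C{E}^L$ is precisely what lets one trade this missing uniqueness for identifiability from a sufficiently long initial-future window, which is exactly the role of the parameter $\ell$ in the definition of s.f.d.p.
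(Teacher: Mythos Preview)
The paper does not actually prove this lemma; it just remarks that it is ``easy to prove'' and moves on. So there is no argument in the paper to compare against, and your write-up is already considerably more detailed than what the authors provide.

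Your approach is sound in outline, but two of the steps need tightening. First, in the step f.e.p.\ $\Rightarrow$ s.f.d.p., the sentence ``by continuity of measure on nested intervals there is some $\ell\le L$ such that any two distinct elements of $\C{E}^L$ already differ when restricted to $[0,\ell)$'' is not correct as stated and is also unnecessary. Two distinct elements of $\C{E}^L$ can perfectly well agree on $[0,\ell)$ for every $\ell<L$ (for example, they might differ only by a point mass at $L$), so no such $\ell\le L$ need exist. What you actually need is much simpler: in your inductive construction the piece starting at a grid point $y$ is completely determined by $\mu|_{[y,y+L]}$ (this fixes both the extension from $\C{E}^L$ and the next $r'$, hence the next grid point). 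Therefore any $\ell>L$ works in the s.f.d.p.\ condition, with no separability argument required. With this correction the argument goes through; the ``common past'' hypothesis is in fact not even used, since in your construction the next piece depends only on the future window.

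Second, in u.d.p.\ $\Rightarrow$ f.e.p.\ you assert that ``u.d.p.\ lets us identify from $\nu$ alone the grid points of the decomposition of $\mu$ that lie in $[R,r-R]$''. The definition of u.d.p.\ literally says only that the piece at a grid point is determined by a radius-$R$ neighbourhood; it does not immediately say that one can recognise \emph{which} points are grid points from local data. This gap can be closed (once you know one grid point in the window you can propagate forward using u.d.p.\ to locate the others, and an initial grid point exists in any subinterval of length $L_{\C{P}}$), but you should spell this out rather than assume it.

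Finally, note that f.d.p.\ as defined in the paper only controls $\mu$ on $[x_0,\infty)$, so strictly speaking what you prove in the last implication is f.d.p.\ $\Rightarrow$ \emph{finally} f.l.p.; the paper's chain of implications should presumably be read in this ``finally'' sense throughout, and your argument is correct at that level.
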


The converse implications do not hold in general: For example Lebesgue measure may easily be decomposed periodically yielding s.f.d.p, still it does not satisfy $u.d.p.$.  Also, we have already discussed how measures with  f.d.p may  not have s.f.d.p. Finally, considering measures coming from point sets with increasing gaps with incommensurate lengths, one can construct examples of measures with  f.l.p. but without f.d.p. However,  assuming  relative denseness of local pieces we can use  'Deloneification' to show  that $f.l.p$ implies $u.d.p.$. This is the content of the next theorem. The proof is similar to the proof of Proposition \ref{hilf} above.

\begin{theorem} Let $\mu$ be a measure on $\RR$ with $f.l.p.$ with parameter $\rho>0$. If for some $L\geq 2 \rho$ the set of occurrences of any local piece of length $L$ is relatively dense, then $\mu$ is either a multiple of Lebesgue measure or has the unique decomposition property.
\end{theorem}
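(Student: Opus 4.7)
The plan is to follow the dichotomy used in Proposition \ref{hilf}, replacing f.d.p.\ by f.l.p.\ and using the relative denseness hypothesis to sharpen the conclusion. I would first fix $L \geq 2\rho$ as in the hypothesis and, invoking f.l.p., enumerate the finitely many patterns $P_1, \ldots, P_N$ that arise as restrictions $T_{-x}(1_{[x-L, x+L]}\mu)$. For each $j$, let $D_j = \{x \in \RR : T_{-x}(1_{[x-L, x+L]}\mu) = P_j\}$ be the set of occurrences of $P_j$; by hypothesis each $D_j$ is relatively dense, and $D := \bigcup_j D_j$ is relatively dense too. The proof then splits on whether $D$ is uniformly discrete.

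In the first case $D$ is a Delone set. Using it as a grid I would decompose $\mu = \ldots | \mu\upharpoonright_{[x_{k-1}, x_k)} | \mu\upharpoonright_{[x_k, x_{k+1})} | \ldots$, where $\{x_k\}_{k \in \Z}$ enumerates $D$. The pattern at $x_k$ is one of $P_1, \ldots, P_N$ and encodes both the local structure of $D$ (hence the next gap $x_{k+1} - x_k$, which is bounded by relative denseness) and the measure $\mu\upharpoonright_{[x_k, x_{k+1})}$ itself, so only finitely many pieces appear and each is determined by the $\mu$-window of radius $L$ about $x_k$. This delivers both f.d.p.\ and u.d.p.

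In the second case, by pigeonhole some $D_j$ accumulates: there exist $x_n \neq y_n$ in $D_j$ with $|x_n - y_n| \to 0$. Since $P_j$ occurs at both, $\mu$ is invariant under translation by $y_n - x_n$ on the overlap $[\max(x_n, y_n) - L, \min(x_n, y_n) + L]$; letting $n \to \infty$ this forces $\mu$ to be a constant multiple $c\lambda$ of Lebesgue measure on an open interval, so $P_j = c \cdot 1_{[-L, L]}\lambda$ and hence $\mu = c\lambda$ on every $P_j$-window $[y - L, y + L]$ with $y \in D_j$. Since $D_j$ is relatively dense, every point of $\RR$ lies close to such a window; any would-be pattern $P_k \neq P_j$ at a point $y \in D_k$ must then agree with $c\lambda$ on its overlap with nearby $P_j$-windows on both sides of $y$, which forces $P_k = c \cdot 1_{[-L, L]}\lambda$. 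Hence all patterns coincide with the same Lebesgue multiple and $\mu = c\lambda$ globally.

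The part I expect to be most delicate is this last globalization: a priori the relative denseness radius $R$ of the $D_j$ need not be smaller than $L$, so the $P_j$-windows may fail to cover $\RR$ and the overlap estimates above may be vacuous. Resolving this cleanly requires either passing to a larger $L$ (permissible since f.l.p.\ holds for all $L \geq 2\rho$, and the Lebesgue character of an accumulating pattern persists under refinement of the scale) or, in the spirit of the ``Deloneification'' used in the proof of Proposition \ref{hilf}, removing the Lebesgue pieces from consideration and arguing that the remaining occurrences form a uniformly discrete set on which the Case 1 analysis applies to yield u.d.p.
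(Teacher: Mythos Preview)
Your strategy---split on whether the occurrence sets are uniformly discrete and recode in the discrete case---is exactly the paper's. The difference lies in where you cut the dichotomy, and your choice is what creates the difficulty you flag at the end.

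You ask whether the \emph{union} $D=\bigcup_j D_j$ is uniformly discrete. The paper instead asks whether \emph{some single} $D_j$ is uniformly discrete. This reverses the logical weight of the two cases. In the paper's Case~1 it suffices that one pattern $\nu$ has a positive minimum gap between occurrences; combined with the relative denseness hypothesis, the set of occurrences of $\nu$ alone is already a Delone set, and the recoding uses only that set as grid. In the paper's Case~2, the negation is that \emph{every} pattern has occurrences with arbitrarily small gaps, so \emph{every} local pattern is a multiple of Lebesgue measure. Since the f.l.p.\ windows of radius $L\ge 2\rho$ around anchor points automatically cover $\RR$ and overlap (this is where the hypothesis $L\ge 2\rho$ enters), the constants must agree and $\mu=c\lambda$ globally. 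No separate globalization argument is needed, and the relative denseness radius $R$ of any individual $D_j$ never enters.

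By contrast, your Case~2 only yields that \emph{one} $P_j$ is Lebesgue, and then you must spread this outward using the relative denseness of $D_j$---precisely the step where you correctly worry that $R$ may exceed $L$. Your proposed fix~2 (remove the Lebesgue pattern and check whether the remaining occurrences are uniformly discrete) amounts to iterating back toward the paper's dichotomy. So there is no error in your plan, but the cleanest route is simply to split on ``some $D_j$ uniformly discrete vs.\ all $D_j$ accumulate'' from the start; then the delicate part disappears.

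One small point in your Case~1: the pattern $P_j$ at $x_k$ records $\mu$ on $[x_k-L,x_k+L]$, but membership of a nearby point $y$ in $D$ depends on $\mu$ on $[y-L,y+L]$, so the pattern at $x_k$ alone does not determine the next grid point $x_{k+1}$. This is harmless for u.d.p.\ (just take the locality radius to be $L$ plus the relative denseness bound), but your sentence ``encodes \ldots\ the next gap $x_{k+1}-x_k$'' overstates what the window at $x_k$ sees.
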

\begin{Proof} We consider two cases.

\smallskip

\textit{Case 1: There is a local measure $\nu$ whose occurrences  have a minimal distance $r>0$.}  In this case the occurrences of $\nu$ in $\mu$ form a Delone set. We can use this Delone set to  do a recoding of $\mu$ by occurrences of $\nu$. More precisely, label the occurrences of $\nu$ in $\mu$ by
$$ \ldots, x_{-1} < x_0 < x_1 < x_2 < x_3 \ldots$$
Then, we can decompose $\mu$ according to
$$ \ldots \mu|_{[x_{-1},x_0)} \mu|_{[x_{0},x_1)}\mu|_{[x_{1},x_2)}\ldots$$
As $\mu$ has f.l.p and there is a minimal distance between occurrences of $\nu$ the Delone set has finite local complexity and  this is indeed a decomposition into finitely many different pieces. By construction the occurrence of each piece is determined locally i.e. the unique decomposition property holds.

\smallskip

\textit{Case 2: The set of occurrences of each local measure $\nu$ has arbitrary small gaps.} The assumption means that  for each local measure $\nu$
$$\nu = S_x \nu |_{\supp \nu}$$
for a set of nonzero $x$ with $x\to 0$. Then, $\nu$ is (locally) translation invariant and hence is a multiple of Lebesgue measure. Different local measures must be the same multiple as $L \geq 2 \rho$.
\end{Proof}

\begin{coro} Let $\mu$ be a measure on $\RR$ with $f.l.p.$ with parameter $\rho$ such that for some $L\geq 2 \rho + S$ the set of occurrences of any local piece of length $L$ is relatively dense. Let $\nu$ be a measure of compact support contained in $[0,S]$. Then, $\mu \ast \nu$ is either Lebesgue measure or has unique decomposition property.
\end{coro}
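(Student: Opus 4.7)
The strategy is to apply the preceding theorem to $\mu\ast\nu$ itself. Two things must be checked: (i) that $\mu\ast\nu$ still satisfies $f.l.p.$ with some parameter, and (ii) that for some $L'\geq 2\rho$ the set of occurrences of every local piece of $\mu\ast\nu$ of length $L'$ is relatively dense. The natural choice is $L':=L-S$, which satisfies $L'\geq 2\rho$ precisely because of the hypothesis $L\geq 2\rho+S$.

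The whole proof then rests on one observation: since $\supp\nu\subset[0,S]$, for every bounded interval $[a,b]$ the restriction $(\mu\ast\nu)|_{[a,b]}$ is determined by $\mu|_{[a-S,b]}$, and any translation carrying $\mu|_{[a-S,b]}$ onto $\mu|_{[a'-S,b']}$ (with $b'-a'=b-a$) also carries $(\mu\ast\nu)|_{[a,b]}$ onto $(\mu\ast\nu)|_{[a',b']}$. From this (i) is immediate: choosing for each $x$ the point $x'$ supplied by the $f.l.p.$ of $\mu$, the set $\{1_{[-L',L']}T_{x'}(\mu\ast\nu):x\in\RR\}$ is the image of $\{1_{[-L'-S,L']}T_{x'}\mu:x\in\RR\}$, and the latter is finite by the $f.l.p.$ of $\mu$ applied at length $L'+S=L\geq 2\rho$. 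Hence $\mu\ast\nu$ has $f.l.p.$ with the same parameter $\rho$.

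For (ii), fix any local piece $P=(\mu\ast\nu)|_{[c,c+L']}$ of $\mu\ast\nu$. Then $\widetilde P:=\mu|_{[c-S,c+L']}$ is a local piece of $\mu$ of length $L$, so by hypothesis its set $D_{\widetilde P}$ of occurrences in $\mu$ is relatively dense. By the observation above, each $y\in D_{\widetilde P}$ produces an occurrence of $P$ in $\mu\ast\nu$ at $y+S$; hence the occurrences of $P$ contain a translate of $D_{\widetilde P}$ and are relatively dense. With (i) and (ii) in hand, the preceding theorem applied to $\mu\ast\nu$ yields the stated dichotomy. The only subtle point is the bookkeeping of interval lengths; this is exactly what forces the buffer $+S$ beyond the $2\rho$ required by the theorem and also the reason why $\nu$ is required to have support in a fixed finite interval rather than merely compact support of unspecified size.
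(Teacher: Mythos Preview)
Your proof is correct and follows the same strategy as the paper: verify that $\mu\ast\nu$ itself satisfies the hypotheses of the preceding theorem and then invoke that theorem. The paper's own proof is a two-sentence sketch, and you have supplied the details it omits; in particular, your choice $L'=L-S$ and the bookkeeping showing that a local piece of $\mu\ast\nu$ of length $L'$ is governed by a local piece of $\mu$ of length $L$ is exactly what is needed (the paper simply writes ``local measures of length $L$'' without tracking the shift by $S$).
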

\begin{Proof} As $\mu$ has $f.l.p$ and $\nu$ has compact support,  the measure  $\mu \ast \nu$ has f.l.p. as well. By assumption  the measure $\mu\ast \nu$ has relative dense set of occurrences  for local measures of length $L$. Now, the statement follows from the previous theorem.
\end{Proof}

If the underlying measure $\mu$ comes from a Delone set of finite local complexity a somewhat stronger statement holds.

\begin{prop} \label{delone-sfdp}
Assume that $D$ is a Delone set of finite local complexity and $\nu$ is a finite signed measure of compact support. Then
$$ \mu=\sum_{x\in D}T_x\nu=\nu * \delta_D $$
has the s.f.d.p.
\end{prop}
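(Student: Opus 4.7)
The plan is to decompose $\mu=\nu * \delta_D$ on a half-line $[x_0,\infty)$ using the points of $D$ themselves as the grid, in the spirit of the proof of Proposition~\ref{hilf}. After a translation I may assume $\supp(\nu)\subset [-L,L]$, and by the FLC of $D$ I enumerate $D=\{x_n : n\in\Z\}$ with $x_n < x_{n+1}$ so that the successive gaps $\ell(n):=x_{n+1}-x_n$ take values in a finite set $\{\ell_1,\dots,\ell_N\}$. For each $n\geq 0$ I take the local piece at $x_n$ to be $\bigl(T_{-x_n}(1_{[x_n,x_{n+1})}\mu),\,[0,\ell(n)]\bigr)$, so that $1_{[x_0,\infty)}\mu$ becomes the translate of the concatenation of these pieces.

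For the f.d.p., the restriction $1_{[x_n,x_{n+1})}\mu$ only sees points of $D$ lying in the bounded window $[x_n-L,x_{n+1}+L]$, and by FLC the translated patches $(D-x_n)\cap[-L,\max_j \ell_j + L]$ take only finitely many shapes as $n$ varies; consequently only finitely many distinct local pieces appear.

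For the s.f.d.p., I intend to invoke Proposition~\ref{hilf}: the only obstruction to obtaining an s.f.d.p.\ (possibly with a modified set of local pieces) is the existence of two local pieces that are multiples of Lebesgue measure. If at most one local piece is such a multiple, the conclusion follows directly from that proposition. If two Lebesgue-multiple pieces appear, then $\mu$ agrees with a multiple of Lebesgue measure on each grid interval on which such a piece occurs; combined with the tiling of the half-line by the Delone grid, a short case analysis shows that either the non-Lebesgue pieces still carry enough information to locally detect the grid (so that the natural decomposition is already simple), or else $\mu$ is finally a multiple of Lebesgue measure on a half-line, hence finally periodic, and s.f.d.p.\ follows from Example~(1) via a periodic decomposition. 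The main obstacle I anticipate is the mixed situation where Lebesgue-multiple and non-Lebesgue pieces coexist: ruling out ambiguity in determining the next grid point then requires carefully exploiting both the compact support of $\nu$ and the FLC of $D$ to show that any discrepancy between possible grid patterns surfaces in $\mu$ within a uniformly bounded window.
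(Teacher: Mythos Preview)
Your setup of the grid as $D$ and the verification of f.d.p.\ are correct and coincide with the paper. The divergence is in the s.f.d.p.\ step.

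The paper does \emph{not} route through Proposition~\ref{hilf}. It verifies the s.f.d.p.\ condition directly for the grid-$D$ decomposition, exploiting the specific form $\mu=\nu*\delta_D$. After normalising so that $\supp\nu\subset[0,s]$ with $s>0$ (the case $\nu=0$ being trivial), one takes $\ell>\max\{s,L_{\C{P}}\}$. The key observation is that in the s.f.d.p.\ hypothesis the common past $\nu_{-m}|\dots|\nu_0$ is a segment of the \emph{decomposition}, hence it determines the grid points --- which are exactly the points of $D$ --- on an interval of length $\ell>s$ to the left of the branching point. Thus the spill-over contributions of these past $D$-points to the measure on the next short window are known and agree at the two occurrences. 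Subtracting them from the (by hypothesis equal) measures on that window leaves only the contribution of the \emph{next} $D$-point on each side, giving $T_{|I_1|}\nu=T_{|J_1|}\nu$; since $\nu\neq 0$ has compact support this forces $|I_1|=|J_1|$. That cancellation is the entire argument, and it yields s.f.d.p.\ for the Delone grid itself, not merely for some recoded set of pieces.

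Your approach via Proposition~\ref{hilf} leaves the two-Lebesgue-pieces case genuinely open. The dichotomy you sketch --- ``either the non-Lebesgue pieces locally detect the grid, or $\mu$ is eventually a multiple of Lebesgue measure'' --- is not justified: nothing you have written rules out aperiodic concatenations in which non-Lebesgue pieces are separated by long stretches built from two Lebesgue pieces of the same density but different lengths, and inside such a stretch the measure alone carries no local mechanism to recover the grid. Your own closing sentence concedes this. Resolving it would amount to showing that the structure $\nu*\delta_D$ lets one recover the next $D$-point from the measure together with the past grid --- which is precisely the paper's direct argument. So the detour through Proposition~\ref{hilf} discards the one piece of information (grid $=D$, and the s.f.d.p.\ hypothesis pins $D$ down to the left) that makes the verification a one-line subtraction.
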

Unfortunately, the proof is not as evident as the statement may seem. Here is the problem: You cannot recover $D$ from $\nu*\delta_D$ in general. Easy periodic examples will convince you of this unpleasant fact. This lead to additional restrictions  in \cite{Klassert}  concerning the single site bumps that were allowed. Here, we can circumvent these additional assumptions, because we can still choose the particular decomposition according to our purposes.

\begin{proof}
 By shifting $D$ and $\nu$ we may suppose that $\min(\supp \nu)=0$; moreover, $\max(\supp \nu)=:s >0$ without restriction. Let $D=\{ x_k|\mid k\in\Z\}$ with the $x_k$ strictly increasing. We decompose $\mu$ starting from $x_0$ and take $ \{ x_k|\mid k\in\N_0 \}$ as a grid. Note that
$$
1_{[x_k,x_{k+1}]}\mu = \sum_{x\in D\cap[x_k-s,x_k]}T_x\nu =T_{x_k}\sum_{x\in D\cap[x_k-s,x_k]}T_{x-x_k}\nu,
$$
with a finite number of different possibilities for $x-x_k$ in $[x_k-s,x_k]$, independently of $x_k$. Therefore,
$$
\C{P}:=\{ (\sum_{x\in D\cap[x_k-s,x_k]}T_{x-x_k}\nu, [0, x_{k+1}-x_k])\mid k\in\Z, k\ge 0 \}
$$
is a finite set of finite pieces that decomposes $1_{[x_0,\infty)}\mu$.
Now we want to show that the decomposition we chose has the s.f.d.p. Choose $l>\max\{ s, L_{\C{P}}\}$ and assume that two pieces
 $\nu_{-m}^{I_{-m}}|...|\nu_0^{I_0}|\nu_1^{I_1}|\nu_2^{I_2}|...|\nu_{m_1}^{I_{m_1}}$ and
$\nu_{-m}^{I_{-m}}|...|\nu_0^{I_0}|\mu_1^{J_1}|\mu_2^{J_2}|...|\mu_{m_2}^{J_{m_2}}$ occur in the decomposition of $\mu$.

Denote the point in the grid where $\nu_1^{I_1}$ starts by $x$ and the point where $\mu_1^{J_1}$ starts by $x^\prime$. Since $\mu$ is given as a sum of translates depending on what is happening to the left, we get that
$$
1_{[x,x+\ell_{\C{P}})}\mu = T_{x^\prime -x}1_{[x^\prime,x^\prime+\ell_{\C{P}})}\mu .
$$
Subtracting the common contribution from $D\cap [x-s,x]$ and $D\cap [x^\prime-s,x^\prime]$ we get
$$
T_{| I_1|}\nu = T_{| J_1|}\nu,
$$
and the asserted equality of the length $| I_1|=|J_1|$ follows.
\end{proof}

The considerations of this section suggest the following as definition for a Delone measure of finite local complexity.

\begin{definition} A measure on $\RR$ is said to be a Delone measure of finite local complexity if there exist finitely many measures $\nu_1,\ldots, \nu_N$ with compact support such that with the sets $D_j$ of occurrences of $\nu_j$ in $\mu$, $j=1,\ldots, N$, the following holds:

\begin{itemize}
\item  The set $\cup_j D_j$ is a Delone set of finite local complexity.
\item For any $x\in\supp \mu$, there exists a $j\in\{1,\ldots, N\}$ and an $p\in D_j$ such that $x$ belongs to $p+  \supp \nu_j$ and $\mu$ agrees with $T_p \nu_j$ on $p + \supp \nu_j$.
\end{itemize}
\end{definition}

Let us note that this definition can be used verbatim to deal with measures in arbitrary dimensions.

\section{The main result}\label{Sec3}
In this section, we present the main result on absence of absolutely continuous spectrum for aperiodic measure of 'finite local complexity'.

\begin{theorem}
 \label{main}
Let $\mu\in\C{M}^\infty$ be a measure that has the s.f.d.p and assume that the absolutely continuous spectrum of the half-line operator $-\Delta +\mu \upharpoonright_{[0,\infty)}$ is not empty. Then,
$\mu$ is eventually periodic.
\end{theorem}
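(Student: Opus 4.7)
The plan is to combine Remling's Oracle Theorem with s.f.d.p.\ to show that the sequence of pieces in the decomposition of $\mu$ must be eventually periodic, which then forces $\mu$ itself to be eventually periodic.

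Fix a decomposition of $1_{[x_0,\infty)}\mu$ on a grid $\{x_0, x_1, \ldots\}$ with pieces $\nu_{x_k}$ drawn from the finite set $\C{P}$, and let $\ell$ be the parameter from the definition of s.f.d.p. Since $\sigma_{ac}(H)\neq\emptyset$, it contains a Borel set $A$ of positive Lebesgue measure. Apply the Oracle Theorem with this $A$, with $C$ chosen so that $\mu\in \C{V}^C$, with $(a,b)=(0,b)$ for some $b\geq \ell$, and with $\varepsilon>0$ to be fixed later. This yields an oracle $\triangle:\C{V}^C_{(-L,0)}\to\C{V}^C_{(a,b)}$; after enlarging $L$ if necessary, assume $L\geq \ell$. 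At each grid point $x_k$, the measure $1_{(-L,0)}S_{x_k}\mu$ is a concatenation of pieces from $\C{P}$ truncated on the left, and $1_{(a,b)}S_{x_k}\mu$ is a concatenation of pieces from $\C{P}$ beginning at $0$ and truncated on the right; both therefore lie in finite sets $\C{F}_-$ and $\C{F}_+$ respectively. Choosing $\varepsilon$ less than half the minimum pairwise distance (in the metric $d$) between distinct elements of $\C{F}_+$, the bound $d(\triangle(1_{(-L,0)}S_{x_k}\mu),1_{(a,b)}S_{x_k}\mu)<\varepsilon$ becomes an exact determination: for all sufficiently large $x_k$, the past restriction $1_{(-L,0)}S_{x_k}\mu$ uniquely determines the future restriction $1_{(a,b)}S_{x_k}\mu$.

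Next, I would introduce a finite state machine on the pieces. Let $N$ be large enough that any $N$ consecutive pieces from $\C{P}$ have total length at least $\max(L,\ell)$, and define
$$\pi_k := (\nu_{x_{k-N}},\ldots,\nu_{x_{k-1}})\in \C{P}^N.$$
The state $\pi_k$ determines $1_{(-L,0)}S_{x_k}\mu$ (it is the last $L$ of the concatenation of the $N$ pieces), hence by the previous paragraph also determines $1_{(a,b)}S_{x_k}\mu$ and in particular $1_{[0,\ell)}S_{x_k}\mu$. Now if $\pi_{k_1}=\pi_{k_2}$ at two large grid indices, then the two occurrences in the decomposition share a common past (of length $\geq\ell$) and a common $[0,\ell)$-window of continuation, so s.f.d.p.\ yields $\nu_{x_{k_1}}=\nu_{x_{k_2}}$. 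Thus $\nu_{x_k}$ is a deterministic function of $\pi_k$ for all large $k$, and consequently so is the transition $\pi_k\mapsto\pi_{k+1}$, which simply drops the oldest piece and appends $\nu_{x_k}$.

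Since $\C{P}^N$ is finite and the transition is deterministic, the orbit $(\pi_k)$ is eventually periodic; this yields eventual periodicity of the piece sequence and hence of $\mu$. The main obstacle is bridging the oracle's continuous approximation and the exact discrete determination one needs, which is handled by restricting to grid points, exploiting the finiteness of $\C{F}_+$, and shrinking $\varepsilon$ accordingly. A subtle point worth flagging is that distinct past piece sequences may yield the same past measure; this is why the state is taken to be the sequence of pieces rather than the underlying measure, and why s.f.d.p.\ is essential in passing from ``the next measure is determined'' to ``the next piece is determined''.
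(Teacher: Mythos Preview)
Your argument is correct and is essentially the paper's proof, repackaged as a finite-state machine: the paper instead builds a coarse grid of spacing in $[L,L+L_{\C{P}}]$, finds a repeated coarse block (recorded together with its internal fine-grid structure) by pigeonhole, and then extends the match one fine-grid piece at a time via the oracle and s.f.d.p.; your state $\pi_k\in\C{P}^N$ plays exactly the role of that coarse block.

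One small technical point: with $(a,b)=(0,b)$ the oracle only pins down the future on the \emph{open} interval $(0,b)$, which does not see a possible point mass at $0$, whereas the s.f.d.p.\ hypothesis is formulated for $1_{[0,\ell)}$ of the continuation. The paper chooses $(a,b)=(-1,\ell)$ precisely so that the identified future window contains $\{0\}$; taking any $a<0$ (and defining $\C{F}_+$ accordingly, which is still finite) repairs your version with no change to the rest of the argument.
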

\begin{proof}
 The idea goes as follows: due to the s.f.d.p there are pieces of arbitrary length that appear infinitely often. If we pick such a piece of sufficient length, the oracle theorem allows us to predict the next piece up to a small error. Since the next piece itself comes from a finite set, it is uniquely determined, provided the error is small enough. That means that after the two occurences of the given long piece, the same shorter piece follows. Iterating the procedure we get longer and longer translates of the measure that agree, which means that the measure is eventually periodic.

Here are the details: assume that $A = \sigma_{ac}(-\Delta +\mu)\not=\emptyset$ is nonempty. Let $G:=\{ x_k\mid k\in\N_0\}$ be the grid and $\C{P}$ the set of pieces of a decomposition with the s.f.d.p and $\ell$ the according length.

Denote
$$
\C{D}_\ell:=\{ 1_{(-1,\ell)}[T_{-x}\mu]\mid x\in G\}
$$
and note that this set is finite, due to the s.f.d.p. Consequently, there is $\varepsilon >0$ with $d( \nu_1,\nu_2) >2\varepsilon$ for different $\nu_1,\nu_2\in \C{D}_\ell$.

This fixes $\varepsilon$ for our application of the oracle theorem; moreover, we choose $a=-1, b=\ell$. Pick $L$ according to the oracle theorem, without restriction $L>\ell$.

We construct a coarser grid $G_L\subset G$ by $y_0:=x_0$ and $y_{k+1}-y_k\in [L,L+L_{\C{P}}]$. Since $G\cap [y_k,y_{k+1}]$ is finite, it is clear that
$$
\C{P}_L:=\{ \left( T_{-y_k}[1_{[y_k,y_{k+1}]}\mu], T_{-y_k}([y_k,y_{k+1}]\cap G)\right)\mid k\in\N\}
$$
is finite. Consequently, there is at least one $k$, for which infinitely many translates of $1_{[y_k,y_{k+1}]}\mu$ appear in $\mu$ so that the corresponding parts of $G$ are translates of $G\cap [y_k,y_{k+1}]$. Call $z^1:= y_{m+1}$, where $y_m$ is  one of the
corresponding points in $G_L$, $m>k$ and denote $y^1:=y_{k+1}$; see the following Figure \ref{fig:5}.
\begin{figure}[h!]
 \centering
 \input{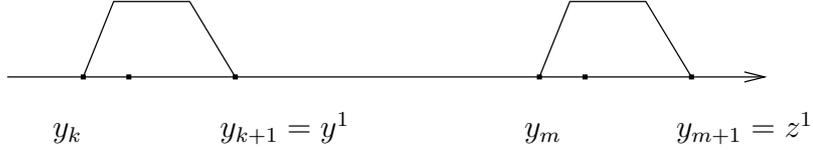}
 \caption{Illustration of the proof.}
 \label{fig:5}
\end{figure}
We apply the oracle theorem with $x=y^1$ and $x=z^1$ and get that
$$
d(1_{(-1,\ell)}T_{y^1}\mu, 1_{(-1,\ell)}T_{z^1}\mu)\le 2\varepsilon .
$$
By our choice of $\varepsilon$ this implies
$$
1_{(-1,\ell)}T_{y^1}\mu=1_{(-1,\ell)}T_{z^1}\mu .
$$
By definition of $\C{P}_L$ we moreover know that the pieces of $\mu$ starting at $y_k$ and $y_m$, respectively, are decomposed in the same way. The s.f.d.p (note that $L>\ell$) guarantees that the pieces start
 at $y^1$ and $z^1$ are translates of each other, so that the next point $y^2:= min(G\cap (y^1,\infty))$ in $G$ following $y^1$ and $z^2:= min(G\cap (z^1,\infty))$ satisfy
$$
z^2-z^1 =y^2-y^1 \mbox{  and  } T_{-y^1}[1_{[y^1,y^2]}\mu]= T_{-z^1}[1_{[z^1,z^2]}\mu] .
$$
This implies that $1_{[y_k,y^2]}\mu$ is a translate of $1_{[y_m,z^2]}\mu$. We can now iterate the procedure to find $y^3,z^3\in G$ such that $1_{[y_k,y^3]}\mu$ is a translate of $1_{[y_m,z^3]}\mu$. Since, moreover, $y^{j+1}-y^j,z^{j+1}-z^j\ge \ell_{\C{P}}$, we get that $1_{[y_k,\infty)}\mu$ is a translate of $1_{[y_m,\infty)}\mu$. In other words, $\mu$ is finally periodic.
\end{proof}

\section{Application to measure dynamical systems.}\label{Sec4}
In this section we consider applications of the main result to dynamical systems of measures.

\bigskip

A  dynamical system $(\varOmega,S)$ consists of a compact set   $\varOmega$  and a continuous action $S$ of $\RR$ on $\varOmega$.
A  dynamical system  is called minimal if
$$\{S_x \mu : x\in \RR\}$$
is dense in $\varOmega$ for every $\mu \in \varOmega$. An $S$ invariant probability measure on a  dynamical system $(\varOmega,S)$  is called ergodic if any invariant measurable subset of $\varOmega$ has measure zero or measure one. If there exists a unique invariant probability measure, then $(\varOmega, T)$ is called uniquely ergodic.  An element $\omega\in \varOmega$ is called periodic with period $x\neq 0$ if $S_x \omega = \omega$. The, dynamical system is called
 periodic if all of its elements are periodic with the same period.

We will be concerned with measure dynamical systems. In this case $\varOmega$ is a closed (and hence compact)  subset of
 $\mathcal{V}^C$ for some $C>0$, which is invariant under the shifts $S=T$.
A measure dynamical system is said to have one of properties defined in the previous section if each element of $\varOmega$ has this property.
Each measure dynamical system  $(\varOmega,T)$ gives rise to the  family  $H_\omega$, $\omega\in \varOmega$, defined by
$$H_\omega = -\Delta + \omega.$$

The  first main  result in this section concerns minimal measure dynamical systems with a finite local complexity property.

\begin{theorem}\label{minimal}  Let $(\varOmega,T)$ be a minimal  measure dynamical system with s.f.d.p. Then, $(\varOmega,T)$ is periodic if one  $H_\omega$ possesses some absolutely continuous spectrum.
\end{theorem}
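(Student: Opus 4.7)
The plan is to deduce Theorem~\ref{minimal} from Theorem~\ref{main} applied to a single element, and then to upgrade the resulting eventual periodicity to genuine periodicity of the whole system by a compactness/minimality argument.

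First, I would reduce to the hypothesis of Theorem~\ref{main}. Pick $\omega\in\varOmega$ with $\sigma_{ac}(H_\omega)\neq\emptyset$. Because $H_\omega$ differs from the orthogonal sum $H_\omega^+\oplus H_\omega^-$ (the Dirichlet decoupling at $0$) by a rank-one perturbation of the resolvent, one has the standard identity
$$\sigma_{ac}(H_\omega)=\sigma_{ac}(H_\omega^+)\cup\sigma_{ac}(H_\omega^-),$$
so at least one half-line operator has nonempty ac spectrum. Assuming (after reflecting the measure if necessary) that $\sigma_{ac}(H_\omega^+)\neq\emptyset$, the hypothesis that $(\varOmega,T)$ has s.f.d.p.\ yields that $\omega$ itself satisfies s.f.d.p., so Theorem~\ref{main} produces $p>0$ and $x_1\in\RR$ with $\omega=T_p\omega$ on $[x_1,\infty)$.

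Next I would extract a genuinely $p$-periodic element of $\varOmega$. By compactness of $\varOmega\subset\mathcal{V}^C$ in the metric $d$, a sequence $x_n\to+\infty$ has a subsequence along which $T_{x_n}\omega\to\omega^\ast\in\varOmega$. The measures $T_{x_n}\omega$ and $T_p(T_{x_n}\omega)=T_{x_n}(T_p\omega)$ agree on the half-line $[x_1-x_n,\infty)$, which for large $n$ contains any prescribed compact set; since $d$ captures local comparisons, $T_p(T_{x_n}\omega)\to\omega^\ast$ as well. Continuity of $T_p$ then gives $T_p\omega^\ast=\omega^\ast$, so $\omega^\ast\in\varOmega$ is a $p$-periodic element.

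Finally, I would propagate periodicity to the whole system via minimality. For any $\eta\in\varOmega$ pick $y_k$ with $T_{y_k}\omega^\ast\to\eta$. Using $T_p\omega^\ast=\omega^\ast$, commutation of the shifts, and continuity of $T_p$,
$$T_p\eta=\lim_k T_p T_{y_k}\omega^\ast=\lim_k T_{y_k}T_p\omega^\ast=\lim_k T_{y_k}\omega^\ast=\eta,$$
so every element of $\varOmega$ has period $p$; by the paper's definition, $(\varOmega,T)$ is periodic.

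The main obstacle I anticipate is the reduction step: s.f.d.p.\ is formulated as a one-sided (right half-line) property, so the reflection argument that handles the case $\sigma_{ac}(H_\omega^-)\neq\emptyset$ requires either that the reflected measure retain s.f.d.p., or, more naturally in the minimal setting, that minimality together with density of orbits let one produce a two-sided decomposition. The remaining steps are relatively soft, relying only on compactness of $\varOmega$, continuity of the $\RR$-action, and the locality of the metric $d$ on $\mathcal{V}^C$.
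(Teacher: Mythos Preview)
Your proposal is correct and follows essentially the same route as the paper: reduce to a half-line operator with nonempty ac spectrum, apply Theorem~\ref{main} to get eventual periodicity of $\omega$, and then use minimality to conclude periodicity of the whole system. The paper's proof compresses the last step into a single clause (``By minimality, then each element of $\Omega$ must be periodic''), whereas you unpack it explicitly---extracting a genuinely $p$-periodic limit point by compactness and then propagating via orbit density and continuity of $T_p$---which is precisely the standard way to justify that clause.

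Your flagged obstacle is also on point: the paper dispatches the left half-line case with a bare ``without loss of generality'' and does not address whether s.f.d.p.\ survives reflection any more than you do, so you are not missing anything the paper supplies.
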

\begin{proof}  Let $\omega\in\Omega$ be chosen such that $H_\omega$ possesses some absolutely continuous spectrum. Then, the restriction of $H_\omega$ to $[0,\infty)$ or the restriction of $H_\omega$ to $(-\infty,0]$ must possess some absolutely continuous spectrum by general principles. Assume without loss of generality that the restriction of $H_\omega$ to $[0,\infty)$ possesses some absolutely continuous spectrum. Then, $\omega$ is eventually periodic by our main result, Theorem \ref{main}. By minimality, then each element of $\Omega$ must be periodic (with the same period).
\end{proof}

\begin{remark} If the system is minimal then the absolutely continuous spectrum of the $H_\omega$ does not depend on $\omega$ by results of Last /Simon \cite{LaSi} (if the measures possess a density with respect to Lebesgue measure).
\end{remark}

\medskip

Our second main  result in this section deals with ergodic measure dynamical systems. We need a little preparation.

\medskip

\begin{lemma} Let $(\varOmega,S)$ be a dynamical system with ergodic measure $m$. If  the set of  eventually  periodic points has positive measure, then any point in the support of $m$ is periodic (with the same period).
\end{lemma}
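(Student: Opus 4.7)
My strategy combines shift-invariance of the set of eventually periodic points with Poincar\'e recurrence, after which ergodicity extracts a single common period. Let $A\subseteq\varOmega$ denote the set of eventually periodic points. The condition ``$\omega = T_p\omega$ on $[x_0,\infty)$'' becomes, after applying $T_t$, ``$T_t\omega = T_p T_t\omega$ on $[x_0+t,\infty)$'', so $A$ is shift-invariant and ergodicity gives $m(A)=1$. Poincar\'e recurrence on the compact metric space $\varOmega$ produces a full-measure set $R$ of recurrent points: each $\omega\in R$ admits a sequence $t_n\to-\infty$ with $T_{t_n}\omega\to\omega$ in the metric on $\C{V}^C$; in particular $m(A\cap R)=1$.

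The key step is the upgrade on $A\cap R$: eventual periodicity becomes actual periodicity. Given $\omega\in A\cap R$ with $\omega = T_p\omega$ on $[x_0,\infty)$, the measures $T_{t_n}\omega$ and $T_p T_{t_n}\omega = T_{t_n}(T_p\omega)$ agree on $[x_0+t_n,\infty)$, and this half-line exhausts $\RR$ as $t_n\to-\infty$. By continuity of the action, $T_{t_n}(T_p\omega)\to T_p\omega$. Testing against $f\in C_c(\RR)$---whose support is eventually contained in $[x_0+t_n,\infty)$, where both measures agree---yields $T_p\omega = \omega$ in the vague limit. Hence $\mathrm{Stab}(\omega):=\{p\in\RR : T_p\omega = \omega\}$ is a closed, non-trivial subgroup of $\RR$, and so equals either $p_0(\omega)\Z$ for some $p_0(\omega)>0$ or all of $\RR$.

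Now set $\Phi(\omega):=p_0(\omega)$ in the first case and $\Phi(\omega):=0$ in the second. Then $\Phi$ is shift-invariant, and a short compactness argument using joint continuity of the action (if $\omega_n\to\omega$, $T_{p_n}\omega_n = \omega_n$, $p_n\to 0^+$, then integer multiples $k_n p_n$ can be chosen to approximate any $t\in\RR$, forcing $\mathrm{Stab}(\omega)=\RR$) shows $\Phi$ is upper semicontinuous and hence measurable. Ergodicity therefore forces $\Phi\equiv p$ $m$-almost everywhere for some $p\in[0,\infty)$. If $p>0$, the closed set $\{T_p\omega = \omega\}$ has full $m$-measure and so contains $\supp m$, making every point of $\supp m$ periodic with period $p$. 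If $p=0$, then $m$-a.e.\ $\omega$ is translation-invariant, hence a multiple of Lebesgue measure; ergodicity fixes this multiple, so $\supp m$ reduces to a single point, trivially periodic with any period.

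The main obstacle is the upgrade in the second paragraph from eventual to true periodicity, since different $\omega\in A$ might a priori exhibit different eventual periods with different starting positions $x_0$; only the combination of Poincar\'e recurrence with the vague-limit argument on the exhausting half-lines $[x_0+t_n,\infty)$ removes the ``eventually'' prefix uniformly enough that the remaining ergodic reduction to a single constant $p$ becomes straightforward.
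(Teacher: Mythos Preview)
Your proof is correct, with one harmless terminological slip: the compactness argument you give (handling the case $p_n\to 0^+$) actually shows that $\{\Phi\le c\}$ is closed, i.e.\ $\Phi$ is \emph{lower} semicontinuous, not upper. Either way measurability follows, so the ergodic reduction to a constant $p$ goes through.

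Your route differs from the paper's in the order of the two main steps. The paper first isolates a single period $p$ via a $0$--$1$ argument on the invariant sets $A^I=\{\omega:\omega$ is eventually periodic with some period in $I\}$, and only afterwards upgrades ``eventually $p$-periodic'' to ``$p$-periodic'' by intersecting the shifts $\bigcap_n S_nA^p$. You reverse this: you first upgrade each $\omega\in A$ to genuine periodicity (with its own period) using Poincar\'e recurrence and the vague-limit argument on the exhausting half-lines, and only then invoke ergodicity on the minimal-period function $\Phi$ to pin down a common $p$. Your order has the advantage that the upgrade step is completely transparent: once $T_{t_n}\omega\to\omega$ with $t_n\to-\infty$, the equality $T_{t_n}\omega=T_pT_{t_n}\omega$ on $[x_0+t_n,\infty)$ passes to the limit cleanly. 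In the paper's version the corresponding step is terse; in particular the claim that $A^I$ is closed is not obvious (the starting points $x_0$ of eventual periodicity are uncontrolled along a convergent sequence), and why elements of $\bigcap_n S_nA^p$ are genuinely periodic requires more unpacking than is given, since $A^p$ is itself shift-invariant. Your recurrence argument sidesteps these issues entirely, at the modest cost of invoking Poincar\'e recurrence for flows on compact metric spaces.
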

\begin{proof} For each compact interval $I$ let $A^I$ be the set of points of $\varOmega$ which are eventually periodic with period $p\in I$. Then, each $A^I$ is closed and invariant under $S$ and hence has measure zero or one by ergodicity. As the  $A^p$  are pairwise disjoint we infer from the assumption  that there exists exactly one $p$ such that $A^p$ has full measure. Then  $S_n A^p$ has full measure for each $n\in \N$ and so has
$$ A^p_\infty := \cap_{n\in \N}  S_n A^p.$$
By construction the elements of $A^p_\infty$ are periodic with period $p$. We have thus found a set of full measure all of whose elements are periodic (with the same period). This implies the statement.
\end{proof}

Whenever  $(\varOmega,T)$ is a measure dynamical system with ergodic measure   $\mu$  then there exists  a closed subset $\Sigma_{ac}$ of $\RR$ such that $\Sigma_{ac}$ is the absolutely continuous spectrum of $H_\omega$ for $\mu$ almost every $\omega \in \varOmega$ (see e.g. the monograph \cite{CL}).

\begin{theorem} \label{ergodic} Let $(\varOmega,T)$ be measure dynamical system of finite local complexity and $m$ an ergodic measure on $\varOmega$.  If $\Sigma_{ac}\neq \emptyset$, then $\omega$ is periodic for every $\omega $ in the support of $m$.
\end{theorem}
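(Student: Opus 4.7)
The plan is to combine Theorem \ref{main} (for a single measure with s.f.d.p.) with the preceding Lemma on ergodic measures and eventually periodic points. First I would use the deterministic ac spectrum: there is a $T$-invariant set $B\subset\varOmega$ of full $m$-measure on which $\sigma_{ac}(H_\omega)=\Sigma_{ac}$, which is nonempty by assumption. The standard half-line decomposition
$$
\sigma_{ac}(H_\omega)=\sigma_{ac}(H_\omega\upharpoonright_{[0,\infty)})\cup \sigma_{ac}(H_\omega\upharpoonright_{(-\infty,0]})
$$
(a rank-one perturbation fact) then guarantees that for each $\omega\in B$ at least one of the two half-line operators has nonempty ac spectrum. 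Consequently, at least one of the measurable sets
$$
B_\pm:=\{\omega\in B : \sigma_{ac}(H_\omega\upharpoonright_{[0,\infty)})\neq\emptyset\}\ \text{resp.}\ \{\omega\in B : \sigma_{ac}(H_\omega\upharpoonright_{(-\infty,0]})\neq\emptyset\}
$$
has positive $m$-measure.

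Assume without loss of generality that $B_+$ has positive measure. The hypothesis of finite local complexity supplies s.f.d.p.\ on each element of $\varOmega$ (via Proposition \ref{delone-sfdp} and the definition of a Delone measure of finite local complexity), so Theorem \ref{main} applies to every $\omega\in B_+$ and yields that $\omega$ is eventually periodic. Thus the set of eventually periodic points has positive $m$-measure, and the preceding Lemma then produces a single period $p$ and an invariant set $A^p_\infty$ of full measure whose elements are genuinely periodic with period $p$. Since $\supp(m)$ is contained in the closure of any set of full measure, every $\omega\in\supp(m)$ is periodic with period $p$, as required.

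The main subtlety I anticipate is the symmetric case when $B_-$ (rather than $B_+$) has positive measure. This is handled by applying Theorem \ref{main} to the reflected measure $\widetilde{\omega}(A):=\omega(-A)$: reflection sends the left half-line problem to the right half-line one and preserves s.f.d.p., so one obtains eventual periodicity of $\omega$ in the negative direction. The mirror of the Lemma (whose proof goes through verbatim once one observes that the set of left-eventually periodic points with period $p$ is closed and $T$-invariant) then yields the same conclusion. A minor point, routinely verified, is the $m$-measurability of $B_\pm$, which follows from the continuity of the spectral projections of $H_\omega\upharpoonright_{[0,\infty)}$ with respect to $\omega$ in the topology of $\mathcal{V}^C$.
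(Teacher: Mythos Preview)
Your argument is correct and follows the same route as the paper: apply Theorem~\ref{main} to obtain eventual periodicity on a set of positive measure, then invoke the preceding Lemma. Your write-up is in fact more careful than the paper's two-line proof, which silently absorbs the half-line reduction and the reflection for $B_-$ (both of which the authors did spell out in the proof of Theorem~\ref{minimal}).

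One small point: your parenthetical justification ``via Proposition~\ref{delone-sfdp} and the definition of a Delone measure of finite local complexity'' is not quite the right hook. Proposition~\ref{delone-sfdp} treats a single bump $\nu$ convolved with a Delone set, whereas the paper's ``Delone measure of finite local complexity'' allows several $\nu_j$; more to the point, the paper's convention (stated just before Theorem~\ref{minimal}) is simply that a measure dynamical system ``has one of the properties defined in the previous section if each element of $\varOmega$ has this property.'' So the hypothesis should just be read as s.f.d.p.\ for every $\omega\in\varOmega$, and no further reduction is needed. This is a terminological looseness in the paper, not a gap in your proof.
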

\begin{proof}  By our main result Theorem \ref{main} the measure $\omega$ is eventually periodic for almost every $\omega\in \varOmega$. This implies the statement by the previous lemma.
\end{proof}

\section{Systems generated by subshifts}\label{Sec5}
In this section, we discuss a systematic way to generate measure dynamical systems with f.l.p by a suspension type construction over a subshift over a finite alphabet. In the last part of the section we then show how this can be applied to deals with Delone sets.

\bigskip

We start by introducing the necessary notation concerning subshifts. Let $A$, called the alphabet,  be a
finite set equipped with the discrete
topology.  A pair $(X,\tau)$ is a subshift over $A$ if $\Omega$
is a closed subset of $A^{\Z}$, where  $A^{\Z}$ is given the
product topology and $\Omega$ is invariant under the shift operator
$\tau :A^{\Z}\longrightarrow A^{\Z}$, $\tau a(n)\equiv a(n+1)$.
We consider
sequences over $A$ as words and  use
standard concepts from the theory of words, see e.g. \cite{Lot1}.  In particular,
 the number of occurrences
of $v$ in $w$ is denoted by $\sharp_v (w)$ and the length $|w|$ of the word
$w=w(1)\ldots w(n)$ is given by $n$.
Given a subshift $(X,\tau)$ over the finite alphabet $A$ with cardinality $N$, we can associate to each  $\ell =(l_1,\ldots, l_N)\in (0,\infty)^N$ a new  measure dynamical system $(X_\ell, T)$  over $\RR$ by a suspension type construction. More precisely, assume without loss of generality that $A=\{1,\ldots, N\}$ and define  for each $x\in X$   the measure $\xi_x$ by
$$ \xi_x := \sum_{j\in\Z}  x(j) \delta_{\sum_{k=0}^{j-1} l_{x (k)}}.$$
where $\delta_x$ denotes the unit point measure at $s\in\RR$.  Now, set
$$X_\ell :=\{ T_t \xi_x : x\in X, t\in\RR\}.$$
Then, $(X_\ell, T)$ is a measure dynamical system, which inherits many properties of $(X,\tau)$. For example it is not hard to see that minimality of $(X,\tau)$ implies minimality of $(X_\ell,T)$. Similarly, $(X_\ell, T)$ is uniquely ergodic if $(X,\tau)$ is uniquely ergodic. More generally, any invariant measure $n$ on $(X,\tau)$ induces a unique measure $m$ on $(X_\ell,T)$ with
$$n(U_w) |I| = m(U_{w,I})$$
for any finite word $w$ and any sufficiently short interval $I$. Here,
$$U_w :=\{x\in X: w= x(0)\ldots x(|w| -1)$$
and
$$U_{w,I}:=\{\omega \in X_\ell: \mbox{ $w'$ occurs in $\omega$ at an $t\in I$}\},$$
with $w':=\sum_{j=1}^{|w|} w(j) \delta_{\sum_{k=0}^{j-1} l_{x (k)}}.$
A discussion of more or less this phenomenon can be found in \cite{LMS,Klassert}.

After these preparations we can now discuss how to associate measure dynamical systems to subshifts. Let $(X,\tau)$ be a subshift over the finite $A$ with cardinality $N$. Let $\nu:=(\nu_1,\ldots, \nu_N)$ an $N$-tupel of measures with compact support such that
$$\inf \supp \nu_j =0$$
for all $j=1,\ldots, N$. Chose  $l_j:= \sup \supp \nu_j$ if $\nu_j$ is not just a point measure in $0$ and $l_j=1$ otherwise. Now, define for each $x\in X$ the measure $\omega_x$ by
$$\omega_x :=\sum_{j\in\Z} \delta_{\sum_{k=0}^{j-1} l_{x_k}} \ast  \nu_{x(j)}.$$
Thus, we replace in $\xi_x$ any point measure with coefficient $j$ with the measure $\nu_j$. Let
$$\Omega_\nu:=\{ T_t \omega_x : x\in X, t\in\RR\}.$$
By construction $\Omega_\nu$ has $f.d.p$ and by the criteria above, in particular, Proposition \ref{hilf}, we have simple sufficient condition for $s.f.d.p$.
Moreover,  by construction there is a factor map
$$\pi : (X_\ell,T)\longrightarrow (\Omega_\nu,T)$$
such that $\pi (\xi)$ is the measure arising  by replacing any point measure with coefficient $j$ in $\xi$ with the measure $\nu_j$. By standard theory, the factor $(\Omega_\nu,T)$ inherits properties like minimality and unique ergodicity from $(X_\ell,T)$ . Moreover, $\pi$ maps invariant (ergodic) measures on $(X_\ell,T)$  to invariant (ergodic)  measures on $(\Omega_\nu,T)$ (see e.g. \cite{BaakeL-05} for a discussion of these properties).

Putting the above considerations together we have shown the following.

\begin{prop} Let $(X,\tau)$ be a subshift over the finite alphabet $A$. Let $N$ be the cardinality of $A$ and $\nu:=(\nu_1,\ldots, \nu_N)$ be a tuple of  measures with compact support
 and $\inf \nu_j=0$, $j=1,\ldots,N$. Assume that not two of the $\nu_j$ are multiple of Lebesgue measure.
 Then $(\Omega_\nu, T)$ is a measure dynamical system with $s.f.d.p$ and  the following holds:

(a) The  measure dynamical sytem $(\Omega_\nu, T)$ is minimal if $(X,\tau)$ is minimal.

(b)  Any invariant probability measure $n$ on $(X,\tau)$ induces a canonical invariant probability measure $m$ on $(\Omega_\nu, T)$. If $n$ is ergodic, so is $m$.

(c) If $(X,\tau)$ is uniquely ergodic, so is $(\Omega_\nu, T)$.

\end{prop}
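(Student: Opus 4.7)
The plan is to split the proof into two parts: first verifying s.f.d.p.\ directly from the suspension construction, and then transferring minimality, invariance/ergodicity of measures, and unique ergodicity through the chain $(X,\tau) \rightsquigarrow (X_\ell,T) \xrightarrow{\pi} (\Omega_\nu,T)$ using standard topological-dynamics facts about factor maps.

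For the s.f.d.p., I would use the canonical grid coming from the suspension: for each $x\in X$, the points $g_j(x):=\sum_{k=0}^{j-1} l_{x(k)}$ partition $\omega_x$ into translates of the pieces $(\nu_{x(j)},[0,l_{x(j)}])$. This immediately yields f.d.p.\ with $\C{P}:=\{(\nu_j,[0,l_j]):1\le j\le N\}$. To upgrade to s.f.d.p., I would invoke Proposition \ref{hilf}: the only obstruction recorded there is the existence of two local pieces that are multiples of Lebesgue measure, and this is excluded by hypothesis. If exactly one $\nu_j$ happens to be a multiple of Lebesgue measure, I would adapt the recoding step from that proof, cutting only at the occurrences of the non-Lebesgue pieces and filling the resulting gaps by iterating the exceptional piece; this produces a still-finite enlarged set of local pieces for which the decomposition is simple.

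For (a), I would first verify that $(X_\ell,T)$ inherits minimality from $(X,\tau)$ as sketched in the text, and then use that $\pi$ is continuous, equivariant, and surjective so that the minimal system $(X_\ell,T)$ maps onto the minimal system $(\Omega_\nu,T)$. For (b), the suspension recipe $n(U_w)\,|I| = \tilde n(U_{w,I})$ lifts an invariant probability measure $n$ on $(X,\tau)$ to an invariant $\tilde n$ on $(X_\ell,T)$; taking $m:=\pi_\ast\tilde n$ gives the desired invariant measure on $(\Omega_\nu,T)$, and ergodicity is preserved because $T$-invariant measurable subsets of $\Omega_\nu$ pull back to $T$-invariant measurable subsets of $X_\ell$. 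For (c), unique ergodicity passes through the suspension by the same recipe, and then to the factor because any $T$-invariant probability measure $m'$ on $\Omega_\nu$ admits a $T$-invariant lift $\tilde m'$ to $X_\ell$ (take any preimage under $\pi_\ast$, which exists since $\pi$ is surjective onto probability measures, then Cesaro average); unique ergodicity of $(X_\ell,T)$ then forces $\tilde m'$, and hence $m'$, to be unique.

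The main obstacle I expect is the s.f.d.p.\ step when one $\nu_j$ is a multiple of Lebesgue measure: here the naive grid fails to be simple and one must check carefully that the recoding of Proposition \ref{hilf} still yields only finitely many distinct pieces and that each occurrence is determined by a bounded window. The remaining ingredients for (a)--(c) are standard factor-map arguments in topological dynamics and do not require essentially new ideas beyond the reasoning already sketched in the paragraphs preceding the statement.
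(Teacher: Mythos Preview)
Your proposal is correct and follows essentially the same route as the paper: the paper's ``proof'' is just the sentence ``Putting the above considerations together we have shown the following,'' referring to the preceding paragraphs where f.d.p.\ is read off from the suspension construction, Proposition~\ref{hilf} is invoked for s.f.d.p., and the chain $(X,\tau)\rightsquigarrow(X_\ell,T)\xrightarrow{\pi}(\Omega_\nu,T)$ together with standard factor-map facts handles (a)--(c). Your write-up is in fact more explicit than the paper on the lifting step in (c) and on the recoding needed when one $\nu_j$ is a multiple of Lebesgue measure.
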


Combined with our main result, Theorem \ref{main}, the previous proposition gives the following.

\begin{coro} Let $(X,\tau)$ be a subshift over a finite alphabet with ergodic meausure $n$. Let  $(\Omega_\nu,T)$ with ergodic measure $m$  be associated to $(X,\tau)$ as in the previous proposition. If the almost sure absolutely continuous spectrum of the family $(H_\omega)$, $\omega\in\Omega_\nu$, is not empty, then any $\omega$ from the support of $m$ is periodic.
\end{coro}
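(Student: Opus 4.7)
The plan is to deduce this corollary by direct application of Theorem \ref{ergodic} to the measure dynamical system $(\Omega_\nu, T)$ that was constructed in the preceding proposition. The verification essentially amounts to checking that the hypotheses of Theorem \ref{ergodic} are satisfied.

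First, I would invoke part (b) of the previous proposition to conclude that the ergodic measure $n$ on $(X,\tau)$ indeed induces a canonical ergodic probability measure $m$ on $(\Omega_\nu, T)$, so the setup of Theorem \ref{ergodic} is available. Next, the proposition guarantees that $(\Omega_\nu, T)$ has s.f.d.p.; consequently every $\omega\in\Omega_\nu$ has the s.f.d.p., which is the property actually used in the proof of Theorem \ref{ergodic} via Theorem \ref{main} (the phrase ``finite local complexity'' in the statement of Theorem \ref{ergodic} should be read as s.f.d.p., which is what the proof requires).

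Now, by definition of $\Sigma_{ac}$, the assumption that the almost sure absolutely continuous spectrum is non-empty means precisely that $\Sigma_{ac}\neq\emptyset$ for the family $(H_\omega)_{\omega\in\Omega_\nu}$ with respect to $m$. Therefore Theorem \ref{ergodic} applies and yields that every $\omega$ in the support of $m$ is periodic, which is exactly the claimed conclusion.

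Since everything reduces to citing earlier results, there is essentially no obstacle. The only point one might want to double-check is that the notion of ``almost sure absolutely continuous spectrum'' in the corollary matches the closed set $\Sigma_{ac}$ used in Theorem \ref{ergodic}; this is the standard fact, recalled just before Theorem \ref{ergodic}, that for an ergodic measure dynamical system the absolutely continuous spectrum of $H_\omega$ is $m$-almost surely equal to a deterministic closed set $\Sigma_{ac}$. Once this identification is made explicit, the proof is a one-line application: apply Theorem \ref{ergodic} to $(\Omega_\nu, T)$ equipped with $m$.
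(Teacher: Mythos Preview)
Your proposal is correct and follows essentially the same route as the paper, which simply remarks that the corollary is obtained by combining the preceding proposition with the main result; you make this explicit by invoking Theorem \ref{ergodic} (which is itself Theorem \ref{main} plus the ergodicity lemma), and you correctly flag that the ``finite local complexity'' in the statement of Theorem \ref{ergodic} must be read as s.f.d.p.
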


\begin{remark}{\rm  The result does not state periodicity of $(X,\tau)$. In fact, such a periodicity can not be concluded. Take for example  all measures $\nu_j$ to be equal $\delta_0$. Then, $(\Omega,T)$ will be periodic rrespective of $(X,\tau)$. }
\end{remark}

\begin{remark} {\rm The discussion in this section was phrased in terms of dynamical systems. However, it is clear that we actually have pointwise constructions viz any sequence over $A$ can be turned into a measure with $s.f.d.p$ by the above procedure.}
\end{remark}

\medskip

Let us here shortly describe two classes of subshifts to which our Corollary can be applied:

\bigskip

\textbf{Example - Bernoulli shifts.} Here, the subshift is given by $X=\{0,1\}^\Z$.
The measure $n$ on $X$  is the product measure $\Pi :=\prod \mu$, where $\mu$ is the measure assigning the value $p$ to $\{0\}$ and the value $1 - p$ to $\{1\}$ for some fixed $p$ with $0< p < 1$. Choosing measures $\nu_1$ and $\nu_2$ with compact support, we can then apply the above theory to conclude absence of absolutely continuous spectrum almost surely whenever the resulting dynamical system is not periodic. Of course, for these more is known and in fact  dynamical localization was shown in  \cite{DSS} (at least when the measures in question possess an $L^1$ density).

\medskip

\textbf{Example - Circle maps.}  Let $\alpha \in (0,1)$ be irrational and $\beta \in (0,1)$ be arbitrary. Let $\Omega (\alpha,\beta)$ be the smallest subshift over $\{0,1\}$ containing the function
$$V_{\alpha,\beta} : \Z\longrightarrow \{0,1\}, \; V_{\alpha,\beta} (n):= 1_{(1-\beta,1]} (n\alpha \mod 1).$$
Then, $(\Omega,\alpha)$ is aperiodic, minimal and uniquely ergodic  as $\alpha$ is irrational (see e.g. Appendix of \cite{DL} for discussion). Again, we can conclude absence of absolutely continuous spectrum almost surely whenever the resulting dynamical system is not periodic.

\bigskip

So far, the considerations in this section have been concerned with subshifts and not with Delone sets.
Let us finish this section, by showing how Delone sets of finite local complexity give rise to sequences over a finite alphabet. Combined with the above discussion this will provide a way to associate measure dynamical systems to Delone dynamical systems. We  only sketch the procedure. Let $\Lambda\subset\RR$ be a Delone set of finite local complexity.  Thus, for each $R>0$ the set
$$A :=\{ (x-R,x+R)\cap \Lambda : x\in\Lambda\}$$
is finite.  Assume without loss of generality that $0\in\Lambda$ and order the points of $\Lambda$ according to $$\ldots x_{-2} < x_{-1} < 0=x_0 < x_1 < x_2\ldots$$
We can then associate to $\Lambda$ the  sequence
$$ x: \Z\longrightarrow A, \; x_j\mapsto \{ (x_j-R,x_j+R)\cap \Lambda.$$
Obviously, $\Lambda$ can be recovered from $x$ and vice versa (see \cite{Len2} for a further discussion).

\section{Absence of point spectrum and purely singular continuous spectrum} \label{Sec6}
In this section, we shortly discuss a criterion for absence of point spectrum. This will allow us to provide some examples with purely singular continuous spectrum. In the discrete case the criterion was introduced by Gordon \cite{Gor}. A version for continuum operators was given by Damanik/Stolz \cite{DamanikStolz}. In the main result of this section our  measures will be assumed to have a density with respect to Lebesgue measure. The condition of translation boundedness then amounts to a uniform $L^1_{loc}$ property. Accordingly, we will speak about $L^1_{loc}$ dynamical systems.

\bigskip

We start by  discussing a condition first used by Kaminaga  \cite{Kam}  in his study of absence of eigenvalues for discrete Schroedinger operators associated to  circle maps. The condition has then be used in various versions in the study of discrete one-dimensional quasicrystals  (see \cite{Damanik} for a survey with further references). We are going to use it in an analogous way.

\begin{definition} A measure dynamical system  $(\Omega,T)$  with ergodic measure $m$ is said to satisfy condition (K) if  there exist  $P_n\geq 0$ with $P_n\to \infty$ such that
$$ G_n:= \{ \omega \in \Omega : \omega_{[0,P_n)} = (T_{P_n} \omega)_{[0,P_n)} = (T_{-P_n} \omega)_{[0,P_n)} \}$$
satisfies
$$\limsup_{n\to \infty} m (G_n) >0.$$
\end{definition}

The relevance of this condition in the study of absence of eigenvalues comes from the so-called Gordon lemma. Here comes a special case of the 'Gordon Lemma' from \cite{DamanikStolz}.

\begin{lemma}  Let $V$ be function such that $V dx $ is a translation bounded measure. If there exist $P_n$-periodic  translation bounded measures of the form $V_n dx$ with $P_n\to \infty$ and
$$ \int_{-P_n}^{2 P_n} | V(x) - V_n (x)| dx =0$$
for all $n$, then $-\Delta + V$ does not have eigenvalues.
\end{lemma}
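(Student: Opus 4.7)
The plan is the standard Gordon-type contradiction argument, adapted to the continuum setting with only $L^1_{loc}$ potentials. Suppose, for a contradiction, that $E$ is an eigenvalue of $-\Delta+V$ with eigenfunction $u\in L^2(\RR)$, and write $U(x):=(u(x),u'(x))^T$. By uniqueness of solutions to the second-order linear ODE $-w''+(V-E)w=0$ (interpreted as an integral equation, valid since $V\in L^1_{loc}$), we have $\|U(x)\|>0$ for \emph{every} $x\in\RR$; in particular $\|U(0)\|>0$.

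For each $n$, let $u_n$ solve $-u_n''+V_n u_n=E u_n$ with the same initial data at $0$, and set $U_n:=(u_n,u_n')^T$. Since $V=V_n$ almost everywhere on $[-P_n,2P_n]$, uniqueness gives $U\equiv U_n$ on this interval. The algebraic heart of the argument exploits the $P_n$-periodicity of $V_n$: its one-period transfer matrix $M_n=M_n(E)$ satisfies $\det M_n=1$, so Cayley--Hamilton yields $M_n^2-(\mathrm{tr}\,M_n)M_n+I=0$. Applied to $U_n(0)$, this gives the two identities
$$U_n(2P_n)+U_n(0)=(\mathrm{tr}\,M_n)\,U_n(P_n),\qquad U_n(-P_n)=(\mathrm{tr}\,M_n)\,U_n(0)-U_n(P_n).$$
A short case split on whether $|\mathrm{tr}\,M_n|\geq 1$ or $<1$ then yields the Gordon inequality
$$\max\bigl\{\|U_n(-P_n)\|,\|U_n(P_n)\|,\|U_n(2P_n)\|\bigr\}\geq \tfrac{1}{2}\|U_n(0)\|,$$
and the agreement of $U$ and $U_n$ at the four points $0,\pm P_n,2P_n$ allows us to replace $U_n$ by $U$ throughout. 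We therefore extract a sequence $y_n\to\pm\infty$ along which $\|U(y_n)\|\geq\tfrac{1}{2}\|U(0)\|>0$.

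The final step, which I expect to be the main obstacle, is to convert this non-decay of $\|U\|$ into a contradiction with $u\in L^2(\RR)$. A Gronwall estimate for the first-order system associated to $-u''+Vu=Eu$, combined with the translation boundedness of $V\,dx$, shows that $\|U(\cdot)\|$ cannot drop by more than a fixed multiplicative factor over any unit interval; hence the lower bound $\|U(y_n)\|\geq c>0$ propagates to a uniform bound $\int_{y_n-1}^{y_n+1}|u|^2\,dx\geq c'>0$. Since $P_n\to\infty$ we may pass to a subsequence of $y_n$'s whose unit neighborhoods are pairwise disjoint, and summing the local $L^2$ lower bounds contradicts $\|u\|_{L^2}<\infty$. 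This Gronwall-type stability under only $L^1_{loc}$ (or translation-bounded measure) regularity is exactly the technical content of \cite{DamanikStolz} that we would be invoking.
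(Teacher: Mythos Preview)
Your argument is correct and is precisely the standard Gordon argument as carried out in \cite{DamanikStolz}; the paper itself does not supply a proof of this lemma but simply quotes it as a special case of that reference, so there is nothing to compare on the level of strategy.

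One small point worth tightening in your write-up: from the uniform lower bound $\|U(x)\|\geq c'$ on $[y_n-1,y_n+1]$ you pass to $\int_{y_n-1}^{y_n+1}|u|^2\,dx\geq c''$, but $\|U\|^2=|u|^2+|u'|^2$ could in principle be carried entirely by $|u'|^2$. The cleanest fix is to recall that an $L^2$-eigenfunction of $-\Delta+V$ lies in the form domain $W^{1,2}(\RR)$, so $u'\in L^2(\RR)$ as well; then the lower bound $\int_{y_n-1}^{y_n+1}\|U(x)\|^2\,dx\geq 2(c')^2$ over infinitely many disjoint intervals already contradicts $\|u\|_2^2+\|u'\|_2^2<\infty$, and no separate estimate on $\int|u|^2$ is needed. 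Alternatively one can argue, as in \cite{DamanikStolz}, via a pointwise interpolation-type bound relating $|u'|$ to nearby values of $|u|$ using the equation and the uniform local $L^1$ bound on $V$.
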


We are now in a position to state an abstract result on absence of eigenvalues.

\begin{lemma} If the  $L^1_{\mbox{loc}}$  dynamical system  $(\Omega,T)$  with ergodic measure $m$ satisfies condition (K), then $-\Delta  + \omega$ does not have eigenvalues for $m$ almost every $\omega\in\Omega$.
\end{lemma}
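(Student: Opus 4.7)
The plan is threefold: (i) unpack condition (K) into the precise hypothesis of the Damanik--Stolz Gordon lemma, (ii) use a Fatou-type argument to get a set of positive measure on which $-\Delta+\omega$ has no eigenvalues, and (iii) upgrade ``positive measure'' to ``full measure'' via ergodicity.

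For step (i), I would unpack the equalities defining $G_n$. For $\omega\in G_n$, the condition
$\omega|_{[0,P_n)}=(T_{P_n}\omega)|_{[0,P_n)}=(T_{-P_n}\omega)|_{[0,P_n)}$
says precisely that the pattern of $\omega$ on $[0,P_n)$ is identically reproduced (after translation by $\pm P_n$) on $[-P_n,0)$ and on $[P_n,2P_n)$; equivalently, $\omega$ is $P_n$-periodic on $[-P_n,2P_n)$. Writing $\omega=V\,dx$ with $V\in L^1_{\mathrm{loc}}(\RR)$, let $V_n$ be the unique $P_n$-periodic function on $\RR$ agreeing with $V$ on $[0,P_n)$. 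Then $V_n\,dx$ is translation bounded (since $\omega\in\C{M}^\infty$ and $V_n$ is periodic), and the three-copy structure gives
\[
\int_{-P_n}^{2P_n}|V(x)-V_n(x)|\,dx = 0,
\]
which is exactly the hypothesis of the Gordon lemma quoted just above.

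For step (ii), I would set $G:=\limsup_n G_n=\bigcap_{N}\bigcup_{n\ge N}G_n$. By monotone continuity of $m$ (reverse Fatou for sets, valid since $m$ is finite),
\[
m(G)\ \ge\ \limsup_{n\to\infty} m(G_n)\ >\ 0.
\]
Each $\omega\in G$ lies in $G_{n_k}$ for infinitely many $n_k$, giving a sequence of periods $P_{n_k}\to\infty$ together with $P_{n_k}$-periodic comparison potentials $V_{n_k}$ as constructed in step (i). The Gordon lemma then yields that $-\Delta+\omega$ has no eigenvalues for every $\omega\in G$.

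For step (iii), define $E:=\{\omega\in\Omega:-\Delta+\omega\ \text{has an eigenvalue}\}$. The unitary shift $U_t f(x):=f(x-t)$ intertwines $-\Delta+\omega$ with $-\Delta+T_t\omega$ (immediate on the form level, since $h_0$ is translation invariant and $\int |U_t\psi|^2\,dT_t\omega=\int|\psi|^2\,d\omega$), so $E$ is $T$-invariant. Since $E^c\supset G$ has positive $m$-measure, ergodicity of $m$ forces $m(E^c)=1$, i.e.\ $m(E)=0$, which is the claim. The main (and really only) obstacle is step (i): carefully translating the abstract equalities of restricted measures in the definition of $G_n$ into the concrete three-consecutive-identical-blocks picture needed by Gordon's lemma; measurability of $G_n$ is routine from continuity of $T$ and Borel measurability of restriction, and the final ergodicity step is entirely standard.
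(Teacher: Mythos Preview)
Your proof is correct and follows essentially the same approach as the paper's: define $G=\limsup_n G_n$, use reverse Fatou to get $m(G)\ge\limsup_n m(G_n)>0$, apply the Gordon lemma to each $\omega\in G$, and invoke ergodicity (via shift-invariance of the set of $\omega$ with empty point spectrum) to pass from positive measure to full measure. Your write-up is more explicit in unpacking the three-block condition and the unitary intertwining, but the architecture is identical.
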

\begin{Proof} By ergodicity, the set $\Omega_c$  of $\omega\in \Omega$ with empty point spectrum has either full measure or zero measure. Thus, to prove the statement it suffices to show that it has positive measure. By the previous Gordon lemma, we have that
$$G :=\bigcap_n \bigcup_{k=n}^\infty G_n$$
belongs to $\Omega_c$. By $(K)$ and a short calculation we obtain
$$m(\Omega_c) \geq m(G) \geq \lim_{n\to \infty} m ( \bigcup_{k=n}^\infty G_n ) \geq \limsup m (G_n) >0.$$
This finishes the proof.
\end{Proof}

We can combine the previous lemma with our main result to obtain the following.

\begin{theorem} Let $(\Omega,T)$ be an $L^1_{\mbox{loc}}$  dynamical system with ergodic measure $m$ sastisfying  s.f.d.p and (K). Then, $- \Delta + \omega$ has almost surely purely singular continuous spectrum.
\end{theorem}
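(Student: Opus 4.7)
The plan is to verify separately that $\sigma_{pp}(H_\omega)$ and $\sigma_{ac}(H_\omega)$ are empty for $m$-almost every $\omega$, and then to conclude purely singular continuous spectrum using nonemptiness of the spectrum. The pure point part is immediate: condition (K) is precisely the input required by the Gordon-type lemma proved immediately above, so that lemma yields $\sigma_{pp}(H_\omega) = \emptyset$ for $m$-a.e.\ $\omega \in \Omega$.

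For the absolutely continuous part, I would first use ergodicity to pass to a deterministic statement: there is a closed set $\Sigma_{ac} \subset \RR$ with $\sigma_{ac}(H_\omega) = \Sigma_{ac}$ for $m$-a.e.\ $\omega$, and I then argue by contradiction that $\Sigma_{ac} = \emptyset$. Suppose $\Sigma_{ac} \neq \emptyset$. Since the system has s.f.d.p., Theorem~\ref{ergodic} applies (its proof only relies on Theorem~\ref{main}, whose hypothesis is exactly s.f.d.p.) and forces every $\omega$ in $\mathrm{supp}(m)$ to be periodic with a common period $p > 0$. Then $\mathrm{supp}(m)$ is a single compact $T$-orbit, homeomorphic to $\RR/p\Z$ with Haar measure, and $H_\omega$ is unitarily equivalent to a periodic Schr\"odinger operator. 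For such a periodic $\omega$ the sets $G_n$ from (K) satisfy $G_n \supseteq \mathrm{supp}(m)$ precisely when $p \mid P_n$ and are disjoint from $\mathrm{supp}(m)$ otherwise; the rigidity of this dichotomy together with the aperiodicity implicit in the setting should give the desired contradiction, hence $\Sigma_{ac} = \emptyset$.

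Combining the two steps with the general fact that $\sigma(H_\omega) \neq \emptyset$ a.s.\ (since $H_\omega$ is selfadjoint and lower semibounded by the form-perturbation argument of Section~\ref{Sec1}), I obtain $\sigma(H_\omega) = \sigma_{sc}(H_\omega)$ for $m$-a.e.\ $\omega$, which is exactly purely singular continuous spectrum. The delicate point I foresee is closing the contradiction in the second step: as stated, (K) can in principle be satisfied by a periodic system of period $p$ by choosing $P_n \in p\Z$ tending to infinity, so a rigorous argument either must invoke an explicit aperiodicity hypothesis on $(\Omega,T)$ (as happens in all the examples of Section~\ref{Sec5}) or must more finely exploit the interaction between s.f.d.p.\ and the requirement $\limsup_n m(G_n) > 0$ with $P_n \to \infty$.
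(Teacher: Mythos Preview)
Your overall approach coincides with the paper's: the paper's ``proof'' is the single sentence preceding the theorem, namely combine the Gordon-type lemma (which uses condition (K)) to kill the point spectrum almost surely, and invoke the ergodic consequence of Theorem~\ref{main} (i.e.\ Theorem~\ref{ergodic}) for the absolutely continuous part. Your first two paragraphs reproduce exactly this.

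Where you diverge from the paper is in your attempt to close a contradiction between periodicity and condition (K). You correctly recognize that this does not work: a periodic system of period $p$ satisfies (K) with $P_n = np$, so (K) alone cannot rule out periodicity. The paper does not attempt this step either---it simply passes over the periodic case in silence. As literally stated, the theorem is in fact false for a nontrivial periodic $L^1_{\mathrm{loc}}$ system (such an $H_\omega$ has purely absolutely continuous band spectrum), so the gap you have spotted is genuine and is present in the paper as well. The intended resolution, consistent with the whole thrust of the paper and all the examples in Section~\ref{Sec5}, is to add aperiodicity as a standing hypothesis: then Theorem~\ref{ergodic} gives $\Sigma_{ac}=\emptyset$ directly by contraposition, and no further contradiction argument is needed. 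Your suggestion to ``more finely exploit the interaction between s.f.d.p.\ and (K)'' will not salvage the statement without that extra hypothesis.
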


For the class of systems generated by subshifts, the condition (K) can be derived whenever the subshift satisfies the analog condition. We say that a subshift $(X,\tau)$ satifies (K), if there exist natural numbers $p_n\to \infty$ such that
$$G_n^X :=\{ x\in X: x(-p_n)\ldots x(-1) = x(0) \ldots x(p_n -1) = x(p_n)\ldots x (2 p_n -1)\}$$
satisfy
$$\limsup_{n\to\infty} (G^X_n) >0$$

\begin{prop} Let $(X,\tau)$ be a subshift over a finite alphabet with ergodic measure $n$, which satisfies (K). Let  $(\Omega_\nu,T)$ be an associated measure dynamical system with measure $m$.
Then $(\Omega_\nu,T)$ satisfies condition (K).
\end{prop}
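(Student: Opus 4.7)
The plan is to push the block-repetition structure of $(X,\tau)$ through the suspension to $(\Omega_\nu,T)$, taking $P_n$ to be the length in the continuum of a specific $p_n$-block of symbols. To each $x\in X$ I would associate the continuum period
$$\Pi_n(x):=\sum_{k=0}^{p_n-1} l_{x(k)}.$$
For any $x\in G_n^X$ the three adjacent symbol blocks $x(-p_n)\ldots x(-1)$, $x(0)\ldots x(p_n-1)$, $x(p_n)\ldots x(2p_n-1)$ coincide as words, so by definition of $\omega_x$ the three measure blocks of $\omega_x$ on the intervals $[-\Pi_n(x),0)$, $[0,\Pi_n(x))$, $[\Pi_n(x),2\Pi_n(x))$ are exact translates of one another by $\Pi_n(x)$. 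In particular $\omega_x$ already satisfies the three-block defining condition of (K) with shift $\Pi_n(x)$; the only obstruction is that $\Pi_n(x)$ still depends on $x$.

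Since condition (K) for $(\Omega_\nu,T)$ asks for a fixed sequence, first I would pigeonhole over the finite set $L_n:=\{\sum_j k_j l_j : k_j\in\Z,\ k_j\ge 0,\ \sum_j k_j=p_n\}$ to select $P_n^\ast\in L_n$ with $n(A_n)\ge n(G_n^X)/|L_n|$, where $A_n:=\{x\in G_n^X:\Pi_n(x)=P_n^\ast\}$, and set $P_n:=P_n^\ast$. Second, to manufacture a positive-measure set of continuum shifts inside $G_{P_n}^\Omega$, I would restrict $A_n$ further by two auxiliary symbol equalities, $x(-p_n-1)=x(-1)$ and $x(p_n)=x(2p_n)$, obtaining a subset $\tilde A_n\subset A_n$. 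A direct bookkeeping of partial symbol contributions at the boundaries of $[-P_n^\ast,2P_n^\ast)$ shows that on $\tilde A_n$ the $P_n^\ast$-periodicity of $\omega_x$ extends by one symbol on each side, so $T_t\omega_x\in G_{P_n}^\Omega$ for every $t$ in the nonempty interval $(-l_{x(p_n)},l_{x(-1)})$, of length at least $l_{\min}:=\min_j l_j>0$. Using the suspension representation $m=(E_n[l])^{-1}\phi_\ast(dn\otimes dt|_Y)$ over the fundamental domain $Y=\{(x,t):x\in X,\ t\in[0,l_{x(0)})\}$, this produces the estimate
$$m\bigl(G_{P_n}^\Omega\bigr)\;\ge\;\frac{l_{\min}}{E_n[l]}\, n(\tilde A_n).$$

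The hard part will be bounding $n(\tilde A_n)$ from below in the $\limsup$. The two auxiliary equalities cost at most a multiplicative factor depending only on the symbol distribution of $n$, via $\tau$-invariance and by choosing the matching symbol on each side to be the most frequent; this uses only that $n$ is shift-invariant with every symbol having positive marginal. The genuine obstacle is the pigeonhole loss $1/|L_n|$, since $|L_n|$ can grow polynomially in $p_n$ when the lengths $l_j$ are rationally independent. In the applications of interest---Sturmian and circle-map subshifts---the block-repetition forces the symbol frequencies in a $p_n$-block on $G_n^X$ to be essentially constant, so $|L_n|$ is bounded uniformly in $n$ and no loss occurs; in the general ergodic case one passes to a subsequence $n_k$ along which the most populated length $P_{n_k}^\ast$ retains a positive proportion of $G_{n_k}^X$. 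Combined with the estimate above, this yields $\limsup_n m(G_{P_n}^\Omega)>0$, which is condition (K) for $(\Omega_\nu,T)$.
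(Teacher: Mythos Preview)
The paper disposes of this in one sentence (``This follows easily from the defining properties of $m$''), so there is no detailed argument to compare against; you have done the work of unpacking what actually has to be checked, and you have correctly located the two real difficulties: the continuum period $\Pi_n(x)=\sum_{k=0}^{p_n-1}l_{x(k)}$ depends on $x$ and not just on $n$, and passing from the cross-section $\{\omega_x\}$ to a set of positive $m$-measure requires an interval of flow-times, hence control of one extra symbol at each boundary.

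That said, your argument does not close either gap in the stated generality. For the pigeonhole step, the loss $1/|L_n|$ can grow polynomially in $p_n$ when the $l_j$ are rationally independent, and your escape route---passing to a subsequence along which the most populated length retains a positive proportion of $G_{n_k}^X$---is simply asserted; nothing in the hypotheses forces such a subsequence to exist for an arbitrary ergodic $(X,\tau,n)$ satisfying (K). For the boundary conditions $x(-p_n-1)=x(-1)$ and $x(2p_n)=x(p_n)$, the claim that they cost only a bounded factor ``by choosing the matching symbol to be the most frequent'' misreads the constraint: the required symbol is not free to choose but is forced to equal $x(-1)$ (resp.\ $x(p_n)$), and the conditional law of $x(-p_n-1)$ given a long cylinder is not controlled by the one-letter marginals for a general ergodic measure. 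Your scheme does go through for the circle-map example the paper actually uses (there the letter counts on $G_{p_n}^X$ are essentially pinned and $|L_n|$ stays bounded), but as a proof of the proposition in the generality stated it is incomplete.
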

\begin{Proof} This follows easily from the defining properties of $m$.
\end{Proof}

\textbf{Example - circle maps.} The previous proposition can be applied to all circle maps with $\alpha$ whose continued fraction expansion has infinitely many coefficients of value  at least $4$  \cite{Kam}. The set of such  $\alpha$'s is  a set of full Lebesgue measure.

\bigskip

\medskip

\bigskip

\textbf{Acknowledgements.}  P.S. and D.L. take the opportunity to thank David Damanik for an invitation to a highly enjoyable conference. Useful discussions with Christian Remling on this occasion are also gratefully acknowledged.

\end{document}